\newtheorem{thm}{Theorem}[section]
\newtheorem{lem}[thm]{Lemma}
\newtheorem{prop}[thm]{Proposition}
\theoremstyle{definition}
\newtheorem{defn}[thm]{Definition}
\newtheorem{ass}[thm]{Assumption}
\theoremstyle{remark}
\newtheorem{rem}[thm]{Remark}
\numberwithin{equation}{section}
\newcommand{\R}{\mathbb{R}}
\newcommand{\C}{\mathbb{C}}
\newcommand{\N}{\mathbb{N}}
\newcommand{\ind}{\mathbf{1}}
\newcommand{\cA}{\mathcal{A}}	
\newcommand{\cB}{\mathcal{B}}	
\newcommand{\cF}{\mathcal{F}}
\newcommand{\cY}{\mathcal{Y}}
\newcommand{\cX}{\mathcal{X}}
\newcommand{\cD}{\mathcal{D}}
\newcommand{\FF}{\mathbb{F}}
\newcommand{\EE}{\mathbb{E}}
\newcommand{\QQ}{\mathbb{Q}}
\newcommand{\ud}{\mathrm d}
\newcommand{\tildeM}{\widetilde{M}}
\newcommand{\lois}{L^{{\rm OIS}}}
\newcommand{\rois}{r}
\newcommand{\im}{\ensuremath{\mathsf{i}}}
\renewcommand{\Re}{\mathrm{Re}}
\renewcommand{\Im}{\mathrm{Im}}
\newcommand{\dbra}[1]{[\kern-0.15em[ #1 ]\kern-0.15em]}
\newcommand{\dbraco}[1]{[\kern-0.15em[ #1 [\kern-0.15em[}
\newcommand{\dbraoc}[1]{]\kern-0.15em] #1 ]\kern-0.15em]}
\newcommand{\dbraoo}[1]{]\kern-0.15em] #1 [\kern-0.15em[}
\newcommand{\be}{\begin{equation}}
\newcommand{\ee}{\end{equation}}
\newcommand{\ba}{\begin{aligned}}
\newcommand{\ea}{\end{aligned}}
\begin{document}

\title{Multiple yield curve modelling with CBI processes}

\author[C. Fontana]{Claudio Fontana}
\address{Claudio Fontana, Department of Mathematics ``Tullio Levi Civita'', University of Padova (Italy)}
\email{fontana@math.unipd.it}

\author[A. Gnoatto]{Alessandro Gnoatto}
\address{Alessandro Gnoatto, Department of Economics, University of Verona (Italy)}
\email{alessandro.gnoatto@univr.it}

\author[G. Szulda]{Guillaume Szulda}
\address{Guillaume Szulda, Laboratoire de Probabilit\'es, Statistique et Mod\'elisation (LPSM), Universit\'e de Paris (France)}
\email{guillaume.szulda@upmc.fr}

\thanks{}
\subjclass[2010]{60G51, 60J85, 91G20, 91G30, 91G60}
%JEL Classification
\keywords{Branching process; self-exciting process; multi-curve model; interest rate; Libor rate; OIS rate; spread; affine process.}
\thanks{We are thankful to two anonymous Reviewers for valuable remarks and suggestions that helped to improve the paper.
G. Szulda acknowledges hospitality and financial support from the University of Verona, where part of this work has been conducted.
Financial support from the University of Padova (research programme BIRD190200/19) and the Europlace Institute of Finance is gratefully acknowledged.}

\date{\today}

%----------------------------------------------------------------

\maketitle

\begin{abstract}
We develop a modelling framework for multiple yield curves driven by continuous-state branching processes with immigration (CBI processes). Exploiting the self-exciting behavior of CBI jump processes, this approach can reproduce the relevant empirical features of spreads between different interbank rates. 
In particular, we introduce multi-curve models driven by a flow of tempered alpha-stable CBI processes. Such models are especially parsimonious and tractable, and can generate contagion effects among different spreads.
We provide a complete analytical framework, including a detailed study of discounted exponential moments of CBI processes. The proposed approach allows for explicit valuation formulae for all linear interest rate derivatives and semi-closed formulae for non-linear derivatives via Fourier techniques and quantization. 
We show that a simple specification of the model can be successfully calibrated to market data.
\end{abstract}

\section{Introduction}	\label{sec:intro}

The emergence of multiple yield curves can be rightfully regarded as the most relevant feature of interest rate markets over the last decade, starting from the 2007-2009 financial crisis. While pre-crisis interest rate markets were adequately described by a single yield curve and interbank rates (to which we generically refer as Ibor rates\footnote{The most relevant Ibor rates are represented by the Libor rates in the London interbank market and the Euribor rates in the Eurozone.}) associated to different tenors were determined by simple no-arbitrage relations, this proved to be no longer valid in the post-crisis scenario, where yield curves associated to interbank rates of different tenors exhibit a distinct behavior. 
This is reflected by the presence of tenor-dependent spreads between different yield curves. In the midst of the financial crisis, such spreads reached their peak beyond 200 basis points and since then, and still nowadays, they continue to remain at non-negligible levels (see Figure \ref{fig:spreads_analysis}). 
The credit, liquidity and funding risks existing in the interbank market, which were deemed negligible before the crisis, are at the origin of this phenomenon (see \cite{CD13,fitr12} in this regard).

\begin{figure}[ht]
\centering
\includegraphics[scale=0.45]{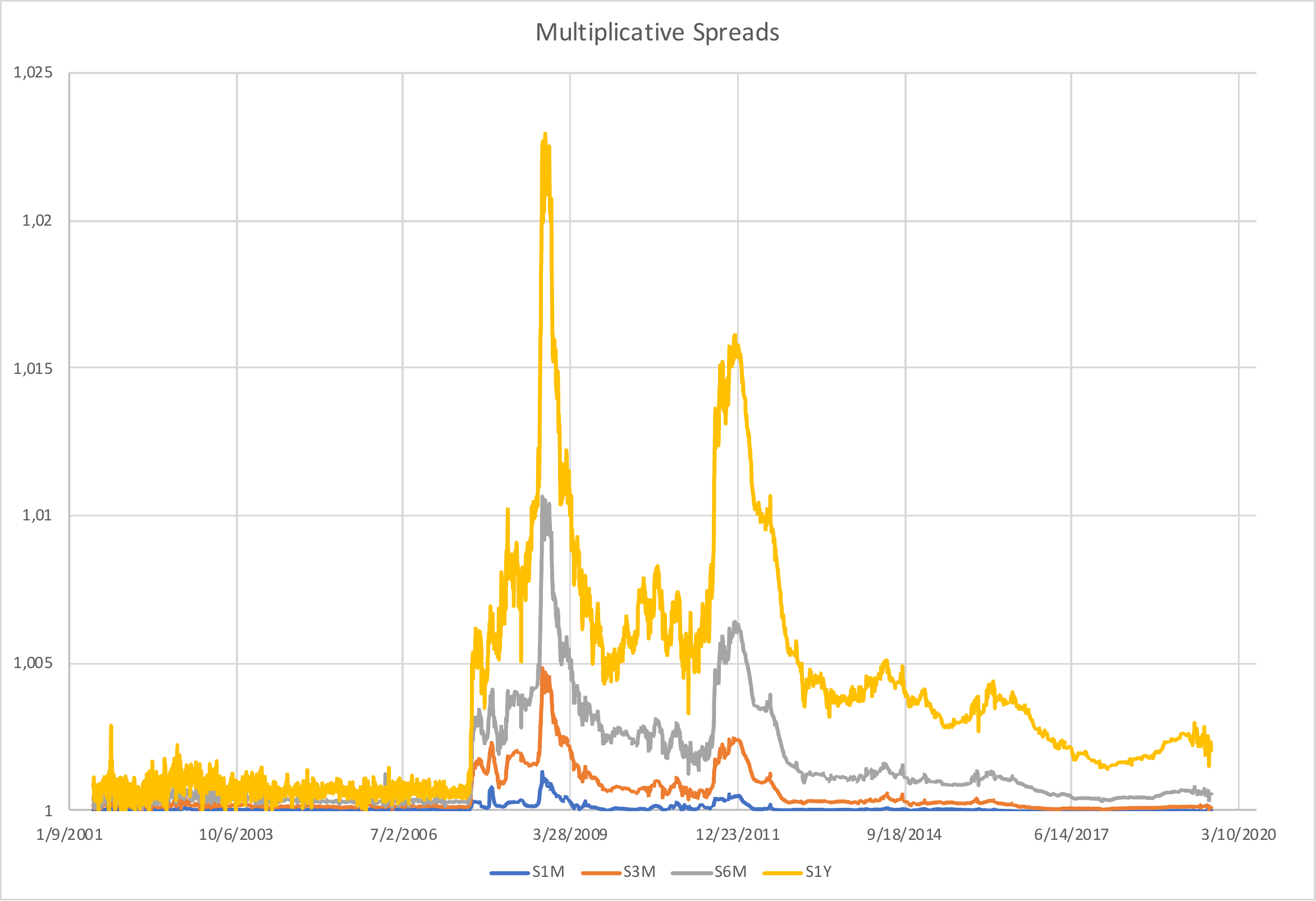}
\caption{Euribor-OIS spreads from 06/2001 to 09/2019.
Source: Bloomberg. \label{fig:spreads_analysis}}
\end{figure}

In all major economies, transaction-based backward-looking rates are currently being introduced as a replacement for Ibor rates (e.g., SOFR in the US market, ESTER in the Eurozone, SONIA in the UK market), also as a response to the 2012 Libor manipulation scandal.
At the time of writing, definitive conclusions on the evolution of Ibor rates cannot be drawn. However, there seems to be a consensus on the fact that the multiple curve framework will remain relevant (and, possibly, even more relevant) in the future. Indeed, in line with \cite{LM:19}, a complete disappearence of Ibor rates, reflecting the unsecured funding costs of banks, does not seem a realistic scenario. For instance, in the Eurozone the Euribor rate will not be abandoned, but only replaced by a reformed version in 2022. 
Moreover, Ibor proxies may arise to address the need for term rates  containing systemic credit or liquidity risk premia (see again \cite{LM:19}).

In this paper, we propose a novel modelling approach to multiple yield curves, ensuring analytical tractability as well as consistency with the most relevant empirical features. An inspection of Figure \ref{fig:spreads_analysis} reveals several important properties of spreads: first, spreads are typically greater than one and increasing with respect to the tenor; second, there are strong comovements (in particular, common upward jumps) among spreads associated to different tenors; third, relatively large values of the spreads are associated to high volatility, showing volatility clustering zones during crisis periods; fourth, low values of some spreads can persist for prolonged periods of time. To the best of our knowledge, a model that can adequately reproduce all these features does not yet exist.

By relying on the theory of {\em continuous-state branching processes with immigration} (CBI processes), we develop a modelling framework that can capture all the empirical properties mentioned above and, at the same time, allows for an efficient valuation of interest rate derivatives written on Ibor rates. 
Exploiting the affine property of CBI processes, we design our modelling framework in the context of the affine multi-curve models recently studied in \cite{CFGaffine}, taking multiplicative spreads and the OIS short rate as fundamental modelling objects. 
By construction, the model achieves a perfect fit to the observed term structures and can generate spreads greater than one and increasing with respect to the tenor.
The construction of the model requires a detailed study of the finiteness of  exponential moments of a CBI process. To this effect, we prove a general explicit characterization of the time of explosion of (discounted) exponential moments of a CBI process (see Section \ref{sec:prelim_CBI}), specializing to our context some techniques introduced in \cite{kr11}. This result can be considered of independent interest in the theory of CBI processes.

In the context of our general modelling framework, we introduce a tractable specification driven by a {\em flow of tempered alpha-stable CBI processes} (see Section \ref{sec:flow}). 
The adoption of a flow of CBI processes (see \cite{DL12}) enables us to capture strong comovements among spreads, including common upwards jumps and jump clustering effects. The characteristic self-exciting behavior of CBI processes proves to be a key ingredient to reproduce these features.
The choice of a tempered alpha-stable jump measure presents a good balance between flexibility and analytical tractability and allows for an explicit characterization of several important properties of the model. All linear interest rate derivatives admit closed pricing formulae, futures convexity adjustments can be explicitly computed and, by relying on Fourier techniques, we derive a semi-closed pricing formula for caplets. In addition, we develop a pricing method based on quantization, which is here applied for the first time to an interest rate setting. 
A specification of this model with two tenors is then calibrated to market data, showing an excellent fit to the data (see Section \ref{sec:calibration}).
We believe that the introduction of models driven by a flow of CBI processes can lead to further applications in other contexts where different term structures coexist.

We close this introduction by briefly discussing the related literature. We restrict our attention to the contributions that are specifically related to our work and do not attempt a general overview of multiple curve modelling, referring instead to the volumes \cite{BianchettiMorini13,Henr14,GR15} for detailed accounts on the topic. 
Our modelling approach adopts a short rate formulation. Short rate multi-curve models have been proposed in \cite{fitr12,GM:14,GMR:15,ken10,kitawo09,MR14} (see \cite[Chapter 2]{GR15} for a unifying treatment of these models) and are based on a representation of Ibor rates in terms of fictitious bond prices, which are computed by analogy to short rate models in the classical single-curve setting.
This results in tractable pricing models, but typically necessitates the modelling of quantities that are not observable on the market, with a consequent difficulty in capturing the stylized facts reported above.
An alternative short rate approach has been recently developed in \cite{CFGaffine}, without postulating the existence of fictitious bonds and modelling instead multiplicative spreads that can be directly inferred from market quotes (see Section \ref{sec:rates_spreads} for more details). 
In this work, we adopt the \cite{CFGaffine} approach. 
While \cite{CFGaffine} focused on the general theoretical properties of the modelling framework, we contribute by introducing a new class of tractable and flexible models that are specifically motivated by the empirical features discussed above.
The present paper is also related to the literature on CBI processes.
After their original application to population dynamics (see \cite{Pardoux} for an overview), CBI processes have been  adopted with success in finance, mainly due to their characteristic self-exciting behavior. Starting with the seminal work \cite{Fil01}, CBI processes have found a natural application in the context of interest rate modelling. In particular, in a single-curve interest rate model, \cite{JMS17} have shown that an alpha-stable CBI process allows to reproduce short rates with persistently low values. The same stochastic process has been used in \cite{JMSZ18} for stochastic volatility modelling, extending the classical model by Heston. CBI processes have been also applied to the modelling of forward prices in energy markets, where jump clustering phenomena are often observed, see  \cite{CMS19,JMSS18}. We also mention that, in a multiple curve setting, self-exciting features have been recently studied by \cite{HNL19} in a reduced-form model of interbank credit risk.

The paper is structured as follows. Section \ref{sec:general} presents some general results on CBI processes and the general modelling approach. This approach is then specialized in Section \ref{sec:flow} to a class of models driven by a flow of tempered alpha-stable CBI processes. Section \ref{sec:calibration} contains some numerical results, including calibration to market data, while Section \ref{sec:conclusions} concludes the paper. For the sake of readability, we postpone to Appendix \ref{sec:proofs} the proofs of the technical results stated in Sections \ref{sec:prelim_CBI} and \ref{sec:alpha_stable}. Appendix \ref{app:simulation} describes a simulation method for tempered alpha-stable CBI processes.

\section{General Modelling of Multiple Curves via CBI Processes}	\label{sec:general}

In this section, we develop a general modelling framework based on CBI processes for financial markets with multiple curves. To this effect, we adapt the affine short rate multi-curve approach of \cite{CFGaffine}, to which we refer for additional details on the general features of the post-crisis interest rate market. In this section, we focus on the construction and properties of the framework. The detailed analysis of an explicit specification is presented in Section \ref{sec:flow}.

\subsection{General properties of CBI processes}	\label{sec:prelim_CBI}

We start by providing some theoretical results which are relevant for the construction of multi-curve models driven by CBI processes. 
For ease of exposition, all proofs are postponed to Appendix \ref{sec:proofs}.
We refer the reader to \cite{Li}, \cite{Li19} and \cite[Chapter 10]{K06} for comprehensive accounts on CBI processes.
We start by recalling the general definition of a (conservative, stochastically continuous) CBI process, which has been first introduced in \cite{KW71}. 
We define the functions $\phi:\R_+\rightarrow\R$ and $\psi:\R_+\rightarrow\R_+$ by
\begin{align}
\phi(z) &:= bz + \frac{\sigma^2}{2}z^2 + \int_0^{+\infty}(e^{-zu}-1+zu)\pi(\ud u),	
\label{eq:branching}\\
\psi(z) &:= \beta z + \int_0^{+\infty}(1-e^{-zu})\nu(\ud u),
\label{eq:immigration}
\end{align}
for all $z\geq0$, where $(b,\sigma)\in\R^2$, $\beta\geq0$ and $\pi$ and $\nu$ are two sigma-finite measures on $(0,+\infty)$ such that $\int_0^{+\infty}(u\wedge u^2)\pi(\ud u)<+\infty$ and $\int_0^{+\infty}(1\wedge u)\nu(\ud u)<+\infty$, respectively.
For $p\geq0$, we also define the function $v(\cdot,p,0):\R_+\rightarrow\R_+$ as the unique non-negative solution to the ODE
\[
\frac{\partial}{\partial t}v(t,p,0) = -\phi\bigl(v(t,p,0)\bigr),
\qquad v(0,p,0) = p.
\]

\begin{defn}	\label{def:CBI}
A Markov process $X=(X_t)_{t\geq0}$ with initial value $X_0=x$ and state space $[0,+\infty)$ is a {\em continuous-state branching process with immigration} (CBI process) with {\em branching mechanism} $\phi$ and {\em immigration rate} $\psi$, denoted as CBI($\phi,\psi$), if its transition semigroup $(P_t)_{t\geq0}$ on $[0,+\infty)$ is defined by
\[
\int_{[0,+\infty)}e^{-p y}P_t(x,\ud y)
= \exp\left(-xv(t,p,0)-\int_0^t\psi\bigl(v(s,p,0)\bigr)\ud s\right),
\qquad\text{ for all }t\geq0.
\]
\end{defn}

A CBI process admits a representation as the solution to a certain  stochastic integral equation, which is especially useful for  numerical simulation purposes (see Appendix \ref{app:simulation}).
To this effect, let $(\Omega,\cF,\QQ)$ be a probability space endowed with a right-continuous filtration $\FF=(\cF_t)_{t\geq0}$, with respect to which all processes are assumed to be adapted. 
Let $W(\ud s, \ud u)$ be a white noise on $(0,+\infty)^2$ with intensity $\ud s\,\ud u$ and $M(\ud s,\ud z,\ud u)$ a Poisson time-space random measure on $(0,+\infty)^3$ with intensity $\ud s\,\pi(\ud z)\,\ud u$ (see \cite{Li,Li19}). The associated compensated random measure is denoted by $\tildeM(\ud s,\ud z,\ud u):=M(\ud s,\ud z,\ud u)-\ud s\,\pi(\ud z)\,\ud u$.
Let also $L=(L_t)_{t\geq0}$ be an increasing L\'evy process (subordinator) with $L_0=0$ and Laplace exponent $\psi$ as given in \eqref{eq:immigration}. By the L\'evy-It\^o decomposition, there exists a Poisson random measure $N(\ud s,\ud z)$ on $(0,+\infty)^2$ with intensity $\ud s\,\nu(\ud z)$ such that
$
L_t = \beta t + \int_0^t\int_0^{+\infty}zN(\ud s,\ud z)
$,
for all $t\geq0$.
We assume that $W$, $M$ and $N$ are independent.

For $x\geq0$, let us consider the following stochastic integral equation, referring to \cite[Section 7.3]{Li} for a detailed account of time-space random measures and the corresponding stochastic integrals:
\be	\label{eq:SDE_CBI}	\ba
X_t = x &+\int_0^t(\beta-b X_{s-})\ud s 
+ \sigma\int_0^t\int_0^{X_{s-}}W(\ud s,\ud u)\\ 
& + \int_0^t\int_0^{+\infty}\!\int_0^{X_{s-}}z\tildeM(\ud s,\ud z,\ud u) + \int_0^t\int_0^{+\infty}zN(\ud s,\ud z),
\qquad\text{ for all }t\geq0.
\ea	
\ee
The following result, which follows directly from \cite[Theorems 8.3 and 8.5]{Li19} and \cite[Theorem 3.1]{DL12}, provides the connection between CBI processes and the stochastic integral equation \eqref{eq:SDE_CBI}.

\begin{prop}	\label{prop:CBI}
A non-negative c\`adl\`ag process $X=(X_t)_{t\geq0}$ with $X_0=x$ is a CBI($\phi,\psi$) process if and only if it is a weak solution to \eqref{eq:SDE_CBI}. Moreover, for every $x\geq0$, equation \eqref{eq:SDE_CBI} admits a unique strong solution $X=(X_t)_{t\geq0}$ on $(\Omega,\cF,\FF,\QQ)$ with $X_0=x$ taking values in $[0,+\infty)$. 
\end{prop}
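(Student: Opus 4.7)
The plan is to reduce the proof to three steps: (i) a generator computation identifying any non-negative c\`adl\`ag weak solution of \eqref{eq:SDE_CBI} as a CBI($\phi,\psi$) process via its Laplace transform, (ii) a converse realisation of the driving noises from a CBI($\phi,\psi$) process, yielding weak existence, and (iii) pathwise uniqueness for \eqref{eq:SDE_CBI}, after which strong existence and uniqueness follow by the Yamada-Watanabe theorem for jump SDEs.

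For the first step, I would take a non-negative c\`adl\`ag weak solution $X$ of \eqref{eq:SDE_CBI} and apply It\^o's formula to $e^{-pX_t}$ for arbitrary $p\geq 0$. The crucial observation is that the integration over $u\in(0,X_{s-})$ in the white-noise Brownian part and in the compensated Poisson part linearises in $X_{s-}$: after taking expectations these contribute $\tfrac{1}{2}\sigma^2 p^2 X_s e^{-pX_s}$ and $X_s\int_0^{+\infty}(e^{-pz}-1+pz)\pi(\ud z)e^{-pX_s}$ respectively, while the drift and immigration terms add $(-\beta p + bpX_s - \int_0^{+\infty}(1-e^{-pz})\nu(\ud z))e^{-pX_s}$. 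By the definitions \eqref{eq:branching}-\eqref{eq:immigration} of $\phi$ and $\psi$, the sum reduces to $\bigl(\phi(p)X_s-\psi(p)\bigr)e^{-pX_s}$. Combining this with the identity $\EE[X_s e^{-pX_s}]=-\partial_p\EE[e^{-pX_s}]$, the Laplace transform $u(t,p):=\EE[e^{-pX_t}]$ solves the first-order PDE $\partial_t u+\phi(p)\partial_p u+\psi(p)u=0$ with $u(0,p)=e^{-px}$. Solving by the method of characteristics along the backward ODE $\partial_t v(t,p,0)=-\phi(v(t,p,0))$, $v(0,p,0)=p$, yields exactly the transform prescribed by Definition \ref{def:CBI}, identifying $X$ as a CBI($\phi,\psi$) process.

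For the converse direction, given a CBI($\phi,\psi$) process $X$ on some filtered probability space, the semimartingale characteristics read off from the generator computed above have precisely the structure of the coefficients appearing in \eqref{eq:SDE_CBI}. Enlarging the probability space if necessary, one constructs mutually independent objects $W$, $M$ and $N$ with the prescribed intensities such that $X$ is a weak solution of \eqref{eq:SDE_CBI}. This is the content of Theorems 8.3 and 8.5 in \cite{Li19}, whose argument I would simply invoke.

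The main technical obstacle is pathwise uniqueness for \eqref{eq:SDE_CBI}, which cannot be tackled by classical Lipschitz-type methods: integrating out the white-noise variable produces an effective square-root diffusion coefficient $\sigma\sqrt{X_{s-}}$, and the compensated Poisson integral has a state-dependent domain of integration $(0,X_{s-})$ that is only H\"older in the state variable. Following \cite{DL12}, I would combine a Yamada-Watanabe regularisation of $|\cdot|^{1/2}$-type, which tames the square-root diffusion, with a comparison argument exploiting the monotonicity of $\ind_{\{u\leq X_{s-}\}}$ in the state variable, together with the integrability hypothesis $\int_0^{+\infty}(u\wedge u^2)\pi(\ud u)<+\infty$, to control the difference of two solutions in the compensated-jump part. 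Pathwise uniqueness combined with the weak existence established above then delivers a unique strong solution on $(\Omega,\cF,\FF,\QQ)$ via the Yamada-Watanabe theorem for c\`adl\`ag SDEs, concluding the proof.
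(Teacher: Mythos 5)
Your proposal is correct and takes essentially the same route as the paper, which does not give a self-contained proof but simply cites \cite[Theorems 8.3 and 8.5]{Li19} for the equivalence between weak solutions of \eqref{eq:SDE_CBI} and CBI($\phi,\psi$) processes, and \cite[Theorem 3.1]{DL12} for pathwise uniqueness and hence strong existence. The generator/Laplace-transform computation and the Yamada--Watanabe-type regularisation you sketch are precisely the content of those cited results, so your argument is a faithful expansion of the reference rather than a different approach.
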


\begin{rem}	\label{rem:BM}
By L\'evy's characterization theorem, the process $B=(B_t)_{t\geq0}$ defined as
\[
B_t := \int_0^t\int_0^{X_{s-}}X_{s-}^{-1/2}\,\ind_{\{X_{s-}>0\}}W(\ud s,\ud u)
+ \int_0^t\int_0^1\ind_{\{X_{s-}=0\}}W(\ud s,\ud u),
\qquad\text{ for all }t\geq0,
\]
is a Brownian motion on $(\Omega,\mathcal{F},\mathbb{F},\mathbb{Q})$. The stochastic integral equation \eqref{eq:SDE_CBI} can then be equivalently rewritten replacing the term $\int_0^t\int_0^{X_{s-}}W(\ud s,\ud u)$ with the usual stochastic integral $\int_0^t\sqrt{X_{s}}\ud B_s$. This shows that CBI processes can be viewed as discontinuous generalizations of the classical square-root process, widely adopted for interest rate modelling.
The general representation \eqref{eq:SDE_CBI} will turn out to be necessary when considering a flow of CBI processes, as in Section \ref{sec:model_spec}.
\end{rem}

\begin{rem}	\label{rem:jump_cluster}
The stochastic integral equation \eqref{eq:SDE_CBI} makes evident the self-exciting behavior of a CBI process. 
Indeed, the two martingale components (i.e., the stochastic integrals with respect to $W$ and $\tildeM$) depend on the current value of the process itself and, therefore, large values of the process are associated to a relatively high volatility.
In particular, the jump intensity increases whenever a jump occurs, thereby generating jump clustering effects. 
These properties have a particularly relevant role for reproducing  the empirical features of spreads reported in Section \ref{sec:intro}.
\end{rem}

From the perspective of financial modelling, the analytical tractability of CBI processes is ensured by the fundamental and well-known link with affine processes (see \cite{dfs03,Fil01}). This is the content of the next result, which provides an analytical description of the joint Laplace transform of the process $X$ and its time integral $\int_0^{\cdot}X_s\,\ud s$. 
As a preliminary, let us define the convex set
\be	\label{eq:cY}
\cY := \biggl\{y\in\R : \int_{[1,+\infty)}e^{-yz}(\pi+\nu)(\ud z)<+\infty\biggr\} \supseteq \R_+.
\ee
In view of the standing assumptions on the measures $\pi$ and $\nu$, the set $\cY$ represents the set of values for which the functions $\phi$ and $\psi$ given in \eqref{eq:branching}-\eqref{eq:immigration} are finite-valued.
We define
\[
\ell := \inf\{y\in\R : \phi(y)<+\infty\}
\qquad\text{ and }\qquad
\kappa := \inf\{y\in\R : \psi(y)>-\infty\}.
\]
It can be easily verified that $\cY=[\ell\vee\kappa,+\infty)$ as long as  $\phi(\ell\vee\kappa)\vee(-\psi(\ell\vee\kappa))$ is finite (equivalently, $\int_{[1,+\infty)}e^{-(\ell\vee\kappa)z}(\pi+\nu)(\ud z)<+\infty$, provided that $\ell\vee\kappa>-\infty$), while $\cY=(\ell\vee\kappa,+\infty)$ otherwise.
For simplicity of presentation, we introduce the following mild technical assumption, which is assumed to be satisfied for the remaining part of Section \ref{sec:general}.

\begin{ass}	\label{ass:Lipschitz}
If $\ell\vee\kappa>-\infty$, then $\int_1^{+\infty}ze^{-(\ell\vee\kappa)z}\pi(\ud z)<+\infty$.
%For every $y\in\partial\cY$, it holds that $\phi'(y)>-\infty$.
\end{ass}

It is well-known that the function $\phi$ is locally Lipschitz continuous on the interior $\cY^{\circ}$, but in general it may fail to be Lipschitz continuous at the boundary $\partial\cY$. 
Assumption \ref{ass:Lipschitz} corresponds to requiring that $\phi'(y)>-\infty$ for $y\in\partial\cY$ and thus ensures that $\phi$ is locally Lipschitz on the entire domain $\cY$.
In the proof of Theorem \ref{thm:affine} below, this assumption enables us to assert the existence of a unique solution to the ODE \eqref{eq:ODE_v}.
Note that Assumption \ref{ass:Lipschitz} is always satisfied by tempered $\alpha$-stable CBI processes with $\alpha\in(1,2)$, as considered in Section \ref{sec:alpha_stable}.

\begin{thm}	\label{thm:affine}
Let $X=(X_t)_{t\geq0}$ be a CBI($\phi,\psi$) process with $X_0=x$. Then $X$ is an affine process.
If Assumption \ref{ass:Lipschitz} holds, then, for every $(p,q)\in\cY\times\R_+$, the ODE
\be	\label{eq:ODE_v}
\frac{\partial}{\partial t}v(t,p,q) = q-\phi\bigl(v(t,p,q)\bigr),
\qquad v(0,p,q) = p,
\ee
admits a unique solution $v(\cdot,p,q):[0,T^{(p,q)})\rightarrow\cY$, where $T^{(p,q)}\in(0,+\infty]$, and it holds that
\be	\label{eq:aff_transform}
\EE\left[\exp\left(-pX_t - q\int_0^tX_s\,\ud s\right)\right]
= \exp\left(-xv(t,p,q) - \int_0^t\psi\bigl(v(s,p,q)\bigr)\ud s\right),
\ee
for all $t<T^{(p,q)}$, where $\phi$ and $\psi$ are defined as in \eqref{eq:branching}-\eqref{eq:immigration} on the extended domain $\cY$. 
\end{thm}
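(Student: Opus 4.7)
The plan decomposes into three parts, in the spirit of the strategy of \cite{kr11}. The affine property requires no real work: the Laplace transform in Definition \ref{def:CBI} already exhibits $X$ as an affine process on $\R_+$ in the sense of \cite{dfs03,Fil01}, with $\phi$ and $\psi$ as its characteristic functionals, and Assumption \ref{ass:Lipschitz} is not needed here. For the ODE \eqref{eq:ODE_v}, since $\phi$ is $C^1$, hence locally Lipschitz, on the open set $\cY^{\circ}$ by the standing moment conditions on $\pi$, the only subtle point is a possibly attained left boundary $\ell\vee\kappa$; Assumption \ref{ass:Lipschitz} precisely secures $\phi'(\ell\vee\kappa)\in\R$ by allowing differentiation under the integral in \eqref{eq:branching}, so $\phi$ is locally Lipschitz on the whole of $\cY$. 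The Picard-Lindel\"of theorem then produces a unique maximal solution $v(\cdot,p,q):[0,T^{(p,q)})\to\cY$, where $T^{(p,q)}\in(0,+\infty]$ is the blow-up time at which the trajectory exits every compact subset of $\cY$.

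To establish the transform identity \eqref{eq:aff_transform}, fix $T<T^{(p,q)}$ and introduce the candidate
\[
M_t := \exp\!\left(-v(T-t,p,q)X_t - q\int_0^t X_s\,\ud s - \int_0^{T-t}\psi\bigl(v(s,p,q)\bigr)\ud s\right),\qquad t\in[0,T].
\]
Applying It\^o's formula to the semimartingale representation \eqref{eq:SDE_CBI} from Proposition \ref{prop:CBI}, the ODE \eqref{eq:ODE_v} exactly cancels the finite-variation part of $M$, so $M$ is a non-negative local martingale, hence a supermartingale, which already yields $\EE[M_T]\leq M_0$, i.e.\ one half of \eqref{eq:aff_transform} at $t=T$. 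For $p\geq0$ and $q=0$ the full identity follows at once from Definition \ref{def:CBI}, and its extension to $(p,q)\in\R_+\times\R_+$ is standard affine-process theory via a Feynman-Kac type argument.

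The main obstacle is promoting $M$ from supermartingale to true martingale on $[0,T]$ when $p$ lies in the negative part of $\cY$: there $e^{-pX_T}$ is unbounded in $X_T$ and a priori nothing forbids a strict inequality $\EE[M_T]<M_0$. Following \cite{kr11}, I would localise by the first exit times $\tau_n$ of $X$ from compact subsets of $\R_+$, so that optional sampling gives $\EE[M_{T\wedge\tau_n}]=M_0$; then, using boundedness of $v(\cdot,p,q)$ on $[0,T]$ (ensured since $T<T^{(p,q)}$) together with the integrability encoded in the definition of $\cY$ to dominate the jump compensator of $M$, pass to the limit $n\to\infty$ by dominated convergence to obtain $\EE[M_T]=M_0$. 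Since $T<T^{(p,q)}$ is arbitrary, this extends \eqref{eq:aff_transform} to the full interval $[0,T^{(p,q)})$.
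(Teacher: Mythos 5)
Your decomposition mirrors the paper's at a structural level: the affine property from Definition~\ref{def:CBI} and \cite{dfs03}, ODE well-posedness from local Lipschitz continuity of $\phi$ on $\cY$ secured by Assumption~\ref{ass:Lipschitz}, and then the transform identity \eqref{eq:aff_transform}. The paper, however, settles the transform identity in one line by applying \cite[Theorem 2.14(b)]{krm12} to the bi-dimensional affine process $\bigl(X,\int_0^{\cdot}X_s\,\ud s\bigr)$; it makes no direct martingale argument. Your proposal is therefore a genuinely different, more self-contained route, but it has a real gap in exactly the place you correctly flag as ``the main obstacle.''

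The gap: from $\EE[M_{T\wedge\tau_n}]=M_0$ you want to pass to $\EE[M_T]=M_0$ ``by dominated convergence,'' invoking boundedness of $v(\cdot,p,q)$ on $[0,T]$ and ``the integrability encoded in the definition of $\cY$.'' Neither provides a dominating random variable. When $p<0$ the exponent $-v(T-t,p,q)\,X_{T\wedge\tau_n}$ is positive and unbounded in the state variable; the integrability condition defining $\cY$ controls the jump compensator of $\log M$ --- exactly what makes $M$ a local martingale --- but gives no uniform pointwise bound on $M_{T\wedge\tau_n}$, and Fatou only returns the supermartingale inequality $\EE[M_T]\le M_0$ that you already have. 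Converting the supermartingale into a true martingale for all $t<T^{(p,q)}$ is precisely the content of \cite[Theorem 2.14]{krm12} (refining \cite[Theorem 4.1]{kr11}), and it is not obtained by a naive dominated-convergence bound but by a subtler argument matching the explosion time of the Riccati system with the finiteness of the exponential moment. Without reproducing that machinery, or exhibiting an explicit dominating variable (which would amount to proving the theorem), your sketch does not close for negative $p$.
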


For $(p,q)\in\cY\times\R_+$, the time $T^{(p,q)}$ appearing in Theorem \ref{thm:affine} represents the maximum joint lifetime of $v(\cdot,p,q)$ and $\int_0^{\cdot}\psi(v(s,p,q))\ud s$. The lifetime $T^{(p,q)}$ characterizes the finiteness of (discounted) exponential moments, a crucial technical requirement for the modelling framework introduced in Section \ref{sec:framework}. By \cite[Proposition 3.3]{krm12} applied to the bi-dimensional affine process $(X,\int_0^{\cdot}X_s\,\ud s)$, it holds  that
\be	\label{eq:lifetime_moments}
T^{(p,q)} = \sup\left\{t\in\R_+ : \EE\bigl[e^{-pX_t-q\int_0^tX_s\,\ud s}\bigr]<+\infty\right\}.
\ee
In particular, we have that $\EE[\exp(-pX_t-q\int_0^tX_s\,\ud s)]<+\infty$, for all $t<T^{(p,q)}$. 
An explicit and general characterization of the lifetime $T^{(p,q)}$ is given in the next theorem. 

\begin{thm}	\label{thm:lifetime}
Suppose that Assumption \eqref{ass:Lipschitz} holds and let $p\in\cY$. For $q\in\R_+$, define the quantity $p_q:=\inf\{y\in\cY:q-\phi(y)\geq0\}$. If $p\geq p_q$, then it holds that $T^{(p,q)}=+\infty$. Otherwise, if $p< p_q$, then
\be	\label{eq:lifetime}
T^{(p,q)} = \int_{\ell\vee\kappa}^p\frac{\ud y}{\phi(y)-q}.
\ee
Suppose furthermore that $\psi(\ell\vee\kappa)>-\infty$. Then, $T^{(p,q)}=+\infty$ holds for all $(p,q)\in[\ell\vee\kappa,+\infty)\times\R_+$ if and only if $-\infty<\phi(\ell\vee\kappa)\leq0$.
\end{thm}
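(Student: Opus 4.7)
The plan is to analyze the ODE \eqref{eq:ODE_v} directly and invoke \eqref{eq:lifetime_moments}, which identifies $T^{(p,q)}$ with the maximal lifetime of the ODE solution. Local Lipschitz continuity of $\phi$ on the closed domain $\cY$, ensured by Assumption~\ref{ass:Lipschitz}, guarantees unique maximal solutions. The engine of the argument is the convexity of $\phi$ on $\cY$ (evident from \eqref{eq:branching}, since $z \mapsto e^{-zu} - 1 + zu$ has second derivative $u^2 e^{-zu} \geq 0$), which makes the sublevel set $\{y \in \cY : \phi(y) \leq q\}$ a (possibly empty) subinterval of $\cY$ with left endpoint $p_q$ and a right endpoint $p_q^\ast \in \cY \cup \{+\infty\}$.

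First I would treat the case $p \geq p_q$. If $p \in [p_q, p_q^\ast]$, then $\partial_t v = q - \phi(v) \geq 0$ at $v = p$, and $v$ monotonically increases toward the stationary point $p_q^\ast$ while remaining inside the sublevel set. If instead $p > p_q^\ast$, then $\partial_t v < 0$ and symmetrically $v$ decreases toward $p_q^\ast$. In both sub-cases the trajectory stays in a compact subinterval of $\cY$ on which $\phi$ is Lipschitz, so the solution is global and $T^{(p,q)} = +\infty$.

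Next, for $p < p_q$, the definition of $p_q$ together with convexity of $\phi$ give $\phi(y) > q$ strictly on $[\ell \vee \kappa, p] \cap \cY$. Hence $v$ decreases strictly and is driven toward $\ell \vee \kappa$, exiting $\cY$ there. Separating variables in \eqref{eq:ODE_v} yields $dt = dy/(q - \phi(y))$, and integrating from $v(0) = p$ down to the boundary gives exactly the formula \eqref{eq:lifetime}. A short continuity-compactness argument (using that $\phi - q$ is bounded away from $0$ on $[\ell \vee \kappa, p]$, or tends to $+\infty$ at an excluded boundary) confirms the integral is finite, consistent with $v$ reaching $\ell \vee \kappa$ in finite time.

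For the additional equivalence, assume $\psi(\ell \vee \kappa) > -\infty$, so that $\ell \vee \kappa \in \cY$; Assumption~\ref{ass:Lipschitz} together with non-negativity of the integrand in \eqref{eq:branching} force $\phi(\ell \vee \kappa) \in \R$. If $\phi(\ell \vee \kappa) \leq 0$, then $\phi(\ell \vee \kappa) \leq q$ for every $q \geq 0$, so $p_q = \ell \vee \kappa$ and every $p \in [\ell \vee \kappa, +\infty)$ falls in the first case, giving $T^{(p,q)} = +\infty$. Conversely, if $\phi(\ell \vee \kappa) > 0$, taking $q = 0$ and $p = \ell \vee \kappa$ puts us in the second case with $p < p_0$, and the integral formula gives $T^{(\ell \vee \kappa, 0)} = 0$, contradicting the hypothesis. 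The delicate point I anticipate is verifying that the ODE flow really stays in the \emph{closed} set $\cY$ in the first case, and that the separation-of-variables computation remains valid all the way down to the boundary in the second case; both rest on Assumption~\ref{ass:Lipschitz} ensuring local Lipschitz continuity of $\phi$ up to $\partial \cY$.
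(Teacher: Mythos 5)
Your overall strategy—exploiting convexity of $\phi$, analyzing the ODE flow, and separating variables—is the same one the paper uses, and the equivalence argument at the end is correct (and streamlined relative to the paper's). However, there is a genuine gap in your treatment of the case $p \geq p_q$.

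You split $p \geq p_q$ into $p\in[p_q,p_q^\ast]$ and $p>p_q^\ast$ and assert that in both sub-cases "the trajectory stays in a compact subinterval of $\cY$," from which you deduce global existence. This fails when $p_q^\ast=+\infty$. That scenario is not excluded: for instance, $\sigma=0$, $\pi=0$ and $b<0$ gives the (perfectly admissible, conservative, supercritical) branching mechanism $\phi(z)=bz$, for which $\{y:\phi(y)\leq q\}=[q/b,+\infty)$, so $p_q^\ast=+\infty$; and the same happens whenever $\phi$ is nonincreasing on $\cY$ (e.g.\ $\sigma=0$, $b<0$, $\int_0^{\infty}u\,\pi(\ud u)<|b|$), or in the trivial case $\phi\equiv 0$ which the paper handles separately. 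In these cases $v(\cdot,p,q)$ increases without bound and is \emph{not} confined to a compact set, so your deduction of global existence does not go through as stated. The conclusion $T^{(p,q)}=+\infty$ is still true, but for a different reason: $q-\phi(v)$ grows at most linearly in $v$ (since $\phi(z)\geq bz$ for $z\geq 0$), so $v$ grows at most exponentially and never explodes in finite time, while $\psi$ has at most linear growth so $\int_0^T\psi(v(s))\,\ud s$ stays finite. You would need to add this argument. By contrast, the paper sidesteps the issue entirely: it only analyzes the ODE started at $p=p_q$ (constant solution if $\phi(p_q)=q$, otherwise $v$ stays trapped below $p_q^+$ with $\psi\circ v$ continuous and finite on compacts) and then invokes monotonicity of exponential moments (via \cite[Proposition~3.3 and Theorem~2.14]{krm12} and the nonnegativity of $X$) to propagate $T^{(p_q,q)}=+\infty$ to all $p\geq p_q$. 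That route is cleaner because it never has to control an unbounded trajectory.

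Two smaller points. First, in the case $p<p_q$ you claim "the integral is finite, consistent with $v$ reaching $\ell\vee\kappa$ in finite time"; this is not always so—when $\ell\vee\kappa=-\infty$ and $\phi-q$ is merely bounded away from zero, the integral diverges and $T^{(p,q)}=+\infty$, which is still consistent with \eqref{eq:lifetime}, so the assertion of finiteness should be dropped rather than "confirmed." Second, the paper additionally distinguishes the sub-cases $\kappa\leq\ell$ (where the lifetime is set by $v$ exiting $\cY$) and $\kappa>\ell$ (where the lifetime is set by $\int_0^\cdot\psi(v(s))\,\ud s$ ceasing to be finite, i.e.\ when $v$ first hits $\kappa$); your argument glosses over this distinction, and a complete proof would need to check that in either regime the separated-variables identity still produces \eqref{eq:lifetime} with the correct lower limit $\ell\vee\kappa$.
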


\begin{rem}	\label{rem:complex}
(1) A general representation of the lifetime of exponential moments of affine processes is given in \cite[Theorem 4.1]{kr11}. For the specific case of CBI processes, this general result is refined by our Theorem \ref{thm:lifetime}. Indeed, \cite[Theorem 4.1]{kr11} requires the validity of additional assumptions, which in particular only allow for CBI processes with a strictly subcritical branching mechanism.

(2) To price non-linear derivatives (see Section \ref{sec:caplet_pricing}), an extension of the affine transform formula \eqref{eq:aff_transform} to the complex domain is needed. To this effect, let $S(\cY^{\circ}):=\{p\in\C :\Re(p)\in\cY^{\circ}\}$, with $\Re(p)$ denoting the real part of $p$. For every $(p,q)\in S(\cY^{\circ})\times\R_+$, Theorem \ref{thm:affine} yields the existence of a unique solution $v(\cdot,\Re(p),q)$ to the ODE \eqref{eq:ODE_v} with initial value $v(0,\Re(p),q)=\Re(p)$ up to a lifetime $T^{(\Re(p),q)}$. By \cite[Theorem 2.26]{krm12}, if $T$ is such that $T<T^{(\Re(p),q)}$ and $v(t,\Re(p),q)\in\cY^{\circ}$ for all $t\in[0,T]$, then the affine transform formula \eqref{eq:aff_transform} holds for $p\in\C$ for all $t\in[0,T]$, replacing $\phi$ and $\psi$ by their analytic extensions to the complex domain $S(\cY^{\circ})$ (see \cite[Proposition 2.21]{krm12}).
In particular, as a consequence of Theorem \ref{thm:lifetime}, the affine transform formula \eqref{eq:aff_transform} is always valid for all $p\in\C$ such that $\Re(p)\in[p_q,+\infty)\cap\cY^{\circ}$.
\end{rem}

\subsection{OIS rates, Ibor rates and multiplicative spreads}	\label{sec:rates_spreads}

In this section, we introduce the fundamental quantities that will be modelled in Section \ref{sec:framework}.
In fixed income markets, the reference rates for overnight transactions are the EONIA (Euro overnight index average) rate in the Eurozone and the Federal Funds rate in the US market. The Eonia and the Federal Funds rates are determined on the basis of overnight transactions and are the underlying of overnight indexed swaps (OIS).
The term structure of OIS discount factors at time $t$ is represented by the map $T\mapsto B(t,T)$, where $B(t,T)$ denotes the price at  time $t$ of an OIS zero-coupon bond with maturity $T$, stripped from the market swap rates of OIS (see, e.g., \cite{GR15}).
We denote by $\rois_t$ the {\em OIS short rate}, defined as the short end of the term structure of instantaneous forward rates implied by OIS zero-coupon bond prices.
In market practice, the OIS short rate is typically approximated by the overnight rate associated to the shortest available tenor and is often adopted as a collateral rate.
The simply compounded OIS spot rate for the period $[t,t+\delta]$ is defined as
\be	\label{eq:lois}
\lois(t,t,\delta) := \frac{1}{\delta}\left(\frac{1}{B(t,t+\delta)}-1\right),
\qquad\text{ for }\delta\geq0\text{ and }t\geq0,
\ee
representing the swap rate of an OIS with a single cashflow at time $t+\delta$, evaluated at time $t$.
Note that the right-hand side of \eqref{eq:lois} corresponds to the pre-crisis textbook definition of Ibor rate.

Ibor rates are the underlying rates of fixed-income derivatives and are determined by a panel of primary financial institutions for unsecured lending.
We denote by $L(t,t,\delta)$ the {\em (spot) Ibor rate} for the time interval $[t,t+\delta]$ fixed at time $t$, where the tenor $\delta$ is typically one day (1D), one week (1W), or several months (1M, 2M, 3M, 6M, 12M). We consider Ibor rates for a generic set $\cD:=\{\delta_1,\ldots,\delta_m\}$ of tenors, with $0<\delta_1<\ldots<\delta_m$, for some $m\in\N$.
In the post-crisis environment Ibor rates associated to different tenors exhibit a distinct behavior and are no longer determined by simple no-arbitrage relations. As mentioned in the introduction, this leads to non-negligible basis spreads and to the emergence of multiple yield curves.

Our main modelling quantities are the {\em spot multiplicative spreads}
\be	\label{eq:mult_spread}
S^{\delta}(t,t) := \frac{1+\delta L(t,t,\delta)}{1+\delta \lois(t,t,\delta)},
\qquad\text{ for all }\delta\in\cD\text{ and }t\geq0,
\ee
together with the OIS short rate $\rois_t$.
In the post-crisis environment, multiplicative spreads are usually greater than one and increasing with respect to the tenor. Abstracting from liquidity and funding issues, this is due to the fact that Ibor rates embed the risk that the average credit quality of the bank panel deteriorates over the term of the loan, while OIS rates reflect the credit quality of a newly refreshed panel (see, e.g., \cite{CDS:01,fitr12}). 
%As will be shown below, an important feature of our approach is the facility of generating multiplicative spreads satisfying such requirements.

As shown in \cite{CFG:16,CFGaffine}, modelling the OIS short rate $\rois_t$ together with the multiplicative spreads $\{S^{\delta}(t,t):\delta\in\cD\}$ suffices to provide a complete description of an interest rate market where the following two sets of assets are traded:
\begin{itemize}
\item OIS zero-coupon bonds, for all maturities $T>0$;
\item forward rate agreements (FRAs), for all maturities $T>0$ and for all tenors $\delta\in\cD$.
\end{itemize}
We recall that a FRA written on the Ibor rate $L(T,T,\delta)$ with rate $K$ is a contract which delivers the payoff $\delta(L(T,T,\delta)-K)$ at maturity $T+\delta$.
The {\em forward Ibor rate} $L(t,T,\delta)$ is defined for $t\leq T$ and $\delta\in\cD$ as the rate $K$ that makes equal to zero the value at time $t$ of a FRA written on $L(T,T,\delta)$ with rate $K$. 
Among all financial derivatives written on Ibor rates, FRAs can be  regarded as the basic building blocks, due to the fact that all linear interest rate products such as interest rate swaps and basis swaps can be represented as portfolios of FRAs (see \cite[Appendix A.1]{CFGaffine}).

As in \cite[Section 3.3]{CFGaffine}, we adopt a martingale approach and directly define the model on the filtered probability space $(\Omega,\cF,\FF,\QQ)$, where $\QQ$ is a probability measure such that all traded assets considered above are martingales under $\QQ$ when discounted by the OIS bank account $\exp(\int_0^{\cdot}\rois_s\ud s)$.
This implies that OIS zero-coupon bond prices can be expressed as
\be	\label{eq:gen_bond_price}
B(t,T) = \EE\Bigl[e^{-\int_t^T\rois_s\ud s}\Bigl|\cF_t\Bigr],
\qquad\text{ for all }0\leq t\leq T<+\infty,
\ee
and forward Ibor rates are given by
\be	\label{eq:gen_FRA_rate}
L(t,T,\delta) = \EE^{T+\delta}[L(T,T,\delta)|\cF_t],
\qquad\text{ for all }\delta\in\cD\text{ and }0\leq t\leq T<+\infty,
\ee
where $\EE^{T+\delta}$ denotes the expectation under the $(T+\delta)$-forward probability measure $\QQ^{T+\delta}$.

The idea of modelling multi-curve interest rate markets via multiplicative spreads is due to M. Henrard (see \cite{Henr14}) and has been recently pursued in \cite{NS15,CFG:16,CFGaffine,EGG18,FGGS20}.
Multiplicative spreads can be directly inferred from quoted Ibor and OIS rates and admit a natural economic interpretation. Indeed, $S^{\delta}(t,t)$ can be regarded as a market expectation (at date $t$) of the riskiness of the Ibor panel over the period $[t,t+\delta]$. As shown in \cite[Appendix B]{CFG:16}, this interpretation can be made precise via a foreign exchange analogy (see also \cite{MM18}).
Furthermore, in comparison to additive spreads (as considered for instance in \cite{mer10,merxie12}), multiplicative spreads represent a particularly tractable modelling quantity in relation with CBI processes.

\subsection{Modelling framework}	\label{sec:framework}

In this section, we present a general modelling framework for the OIS short rate $(\rois_t)_{t\geq0}$ and spot multiplicative spreads $\{(S^{\delta}(t,t))_{t\geq0};\delta\in\cD\}$ based on CBI processes. 
%We adopt a martingale approach, in the spirit of the affine short rate multi-curve models introduced in \cite[Section 3.3]{CFGaffine}, and construct the model under a probability measure $\QQ$ under which all traded assets are martingales when discounted by the OIS bank account $\exp(\int_0^{\cdot}\rois_s\ud s)$.

We assume that the filtered probability space $(\Omega,\cF,\FF,\QQ)$ supports a $d$-dimensional process $X=(X_t)_{t\geq0}$ such that each component $X^j$ is a CBI process with branching mechanism $\phi^j$ and immigration rate $\psi^j$, for $j=1,\ldots,d$. 
We assume that $X^1,\ldots,X^d$ are mutually independent.

Besides the driving process $X$, we introduce the following modelling ingredients:
\begin{enumerate}
\item[(i)] a function $\ell:\R_+\rightarrow\R$ such that $\int_0^T|\ell(u)|\ud u<+\infty$, for all $T>0$;
\item[(ii)] a vector $\lambda\in\R^d_+$;
\item[(iii)] a family of functions $\mathbf{c}=(c_1,\ldots,c_m)$, with $c_i:\R_+\rightarrow\R$ for all $i=1,\ldots,m$;
\item[(iv)] a family of vectors $\boldsymbol{\gamma}=(\gamma_1,\ldots,\gamma_m)$, with $\gamma_i\in\R^d$ for all $i=1,\ldots,m$.
\end{enumerate}

\begin{defn}	\label{def:general_model}
The tuple $(X,\ell,\lambda,\mathbf{c},\boldsymbol{\gamma})$ is said to generate a {\em CBI-driven multi-curve model} if
\begin{align}
\rois_t &= \ell(t) + \lambda^{\top}X_t,	
\label{eq:general_rate}\\
\log S^{\delta_i}(t,t) &= c_i(t) + \gamma_i^{\top}X_t,
\label{eq:general_spread}
\end{align}
for all $t\geq0$ and $i=1,\ldots,m$, and if the following conditions hold:
\be	\label{eq:general_model_life}
-\gamma_{i,j}\in\cY^j
\qquad\text{ and }\qquad
T^{(-\gamma_{i,j},\lambda_j)}=+\infty,
\qquad\text{ for all }i=1,\ldots,m\text{ and }j=1,\ldots,d,
\ee
where the set $\cY^j$ is defined as in \eqref{eq:cY} with respect to the CBI process $X^j$ and $T^{(-\gamma_{i,j},\lambda_j)}$ denotes the lifetime as in Theorem \ref{thm:affine} for the process $X^j$, with $p=-\gamma_{i,j}$ and $q=\lambda_j$.
\end{defn}

Condition \eqref{eq:general_model_life} serves to guarantee that $\EE[\exp(-\int_0^Tr_s\,\ud s)S^{\delta}(T,T)]<+\infty$, for all $T>0$ and $\delta\in\cD$, thus ensuring that the model can be applied to arbitrarily large maturities (i.e., the expected value in \eqref{eq:gen_FRA_rate} is always well-defined).
The role of the time-dependent functions $\ell$ and $\mathbf{c}$ consists in allowing the model to perfectly fit the observed term structures  (see \cite[Proposition 3.18]{CFGaffine} for a precise characterization of this property).
A multi-curve model constructed as in Definition \ref{def:general_model} inherits the properties of the CBI process $X$, in particular its jump clustering behavior (see Remark \ref{rem:jump_cluster}). 
Moreover, a CBI-driven multi-curve model can easily generate common upward jumps in different spreads. Indeed, in view of specification \eqref{eq:general_spread}, this can be  achieved by letting $\gamma^{\top}_i\gamma_j\neq0$, for $i,j=1,\ldots,m$ with $i\neq j$, meaning that the spreads associated to tenors $\delta_i$ and $\delta_j$ are affected by common risk factors. As mentioned in Section \ref{sec:intro}, common upward jumps represent a particularly important stylized fact.
We refer to Section \ref{sec:flow} for a more specific discussion of the adequacy of this approach in reproducing the empirical features of spreads mentioned in Section \ref{sec:intro}. 

\begin{rem}	\label{rem:number_factors}
In general, there are no constraints on the choice of the dimension $d$ of the driving process $X$. On the one hand, $d\geq m$ is needed to ensure non-trivial correlation structures among the $m$ spreads. On the other hand, the case $d<m$ is in line with market practice, which often assumes for simplicity the existence of linear (possibly time-varying) dependence among different spreads.
Let us also mention that models driven by a vector of independent CBI processes have been recently applied to spot and forward energy prices in \cite{CMS19} and \cite{JMSS18}.
\end{rem}

As shown in \cite{CFG:16,CFGaffine}, the basic building blocks for the valuation of interest rate derivatives in a multi-curve setting are represented by OIS zero-coupon bond prices and {\em forward multiplicative spreads} $S^{\delta}(t,T)$, defined as follows (compare with equation \eqref{eq:mult_spread}):
\be	\label{eq:fwd_spread}
S^{\delta}(t,T) := \frac{1+\delta L(t,T,\delta)}{1+\delta\lois(t,T,\delta)},
\qquad\text{  for $\delta\in\cD$ and $0\leq t\leq T<+\infty$},
\ee
where $L(t,T,\delta)$ is the forward Ibor rate and  $\lois(t,T,\delta)$ is the simply compounded OIS forward rate defined by $\lois(t,T,\delta):=(B(t,T)/B(t,T+\delta)-1)/\delta$.
It can be easily checked that \eqref{eq:gen_FRA_rate} implies that the forward multiplicative spread process $(S^{\delta}(t,T))_{t\in[0,T]}$ is a martingale under the $T$-forward measure $\QQ^T$, for every $\delta\in\cD$ and $T>0$ (compare also with \cite[Lemma 3.11]{CFG:16}).

The following proposition shows that in a CBI-driven multi-curve model OIS zero-coupon bond prices and forward multiplicative spreads can be computed in closed form. This represents a fundamental property in view of the practical applicability of our framework.
Due to condition \ref{eq:general_model_life} and the independence of the processes $(X^1,\ldots,X^d)$, the proposition follows by a direct application of Theorem \ref{thm:affine} together with \eqref{eq:gen_bond_price} and the martingale property of $(S^{\delta}(t,T))_{t\in[0,T]}$ under the $T$-forward  measure $\QQ^T$, for all $\delta\in\cD$ and $T>0$.
For $j=1,\ldots,d$ and $(p,q)\in\cY^j\times\R_+$, we denote by $v^j(t,p,q)$ the solution to the ODE \eqref{eq:ODE_v} with branching mechanism $\phi^j$.

\begin{prop}	\label{prop:bond_spreads}
Let $(X,\ell,\lambda,\mathbf{c},\boldsymbol{\gamma})$ generate a CBI-driven multi-curve model. Then:
\begin{enumerate}
\item[(i)] 
for all  $0\leq t\leq T<+\infty$, the OIS zero-coupon bond price $B(t,T)$ is given by
\be	\label{eq:general_bond}
B(t,T) = \exp\left(\cA_0(t,T) + \cB_0(T-t)^{\top}X_t\right),
\ee
where the functions $\cA_0(t,T)$ and $\cB_0(T-t)=(\cB_0^1(T-t),\ldots,\cB_0^d(T-t))^{\top}$ are given by
\begin{align*}
\cA_0(t,T) &:= -\int_0^{T-t}\Bigl(\ell(t+s)+\sum_{j=1}^d\psi^j\bigl(v^j(s,0,\lambda_j)\bigr)\Bigr)\ud s,	\\
\cB_0^j(T-t) &:= -v^j(T-t,0,\lambda_j),
\qquad\text{ for }j=1,\ldots,d;
\end{align*}
\item[(ii)]
for all  $0\leq t\leq T<+\infty$ and $i=1,\ldots,m$, the multiplicative spread $S^{\delta_i}(t,T)$ is given by
\be	\label{eq:general_fwdspread}
S^{\delta_i}(t,T) = \exp\left(\cA_i(t,T) + \cB_i(T-t)^{\top}X_t\right),
\ee
where the functions $\cA_i(t,T)$ and $\cB_i(T-t)=(\cB_i^1(T-t),\ldots,\cB_i^d(T-t))^{\top}$ are given by
\begin{align*}
\cA_i(t,T) &:= c_i(T) + \sum_{j=1}^d\int_0^{T-t}\Bigl(\psi^j\bigl(v^j(s,0,\lambda_j)\bigr)-\psi^j\bigl(v^j(s,-\gamma_{i,j},\lambda_j)\bigr)\Bigr)\ud s,\\
\cB_i^j(T-t) &:= v^j(T-t,0,\lambda_j)-v^j(T-t,-\gamma_{i,j},\lambda_j),
\qquad\text{ for }j=1,\ldots,d.
\end{align*}
\end{enumerate}
\end{prop}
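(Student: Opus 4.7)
\medskip

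\noindent\textbf{Proof proposal.}
The plan is to apply Theorem \ref{thm:affine} componentwise to each of the independent CBI factors $X^1,\dots,X^d$, once for the bond price and once for the forward spread, and then to read off the closed-form expressions by separating the deterministic and the stochastic parts.

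\emph{Part (i).} First I would start from \eqref{eq:gen_bond_price}, plug in $r_s=\ell(s)+\lambda^\top X_s$, and pull out the deterministic factor $\exp(-\int_t^T\ell(s)\,\ud s)$. By the Markov property of $X$ together with the time-homogeneity of each $X^j$, it remains to evaluate
\[
\EE\Bigl[\exp\Bigl(-\sum_{j=1}^d\lambda_j\!\int_0^{T-t}\!\!X^{j}_{s+t}\,\ud s\Bigr)\,\Big|\,\cF_t\Bigr]
= \prod_{j=1}^{d}\EE_{X^j_t}\!\Bigl[\exp\Bigl(-\lambda_j\!\int_0^{T-t}\!\!X^{j}_{s}\,\ud s\Bigr)\Bigr],
\]
where the factorization uses the mutual independence of the components. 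Condition \eqref{eq:general_model_life} (taking $\gamma_{i,j}=0$, which is obviously admissible since the lifetimes are non-increasing in $p$ and $T^{(0,\lambda_j)}=+\infty$ is implied by $T^{(-\gamma_{i,j},\lambda_j)}=+\infty$ whenever $-\gamma_{i,j}\le 0$; in the general case one argues directly from Theorem \ref{thm:lifetime}) ensures the finiteness of each factor for all horizons. Applying the affine transform \eqref{eq:aff_transform} to each factor with $p=0$ and $q=\lambda_j$ yields exactly the stated formula for $B(t,T)$ after a change of variable $s\mapsto s-t$ in the time integral of $\ell+\sum_j\psi^j(v^j(\cdot,0,\lambda_j))$.

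\emph{Part (ii).} For the spread, I would exploit the $\QQ^T$-martingale property of $(S^{\delta_i}(t,T))_{t\le T}$ and rewrite it via the Radon-Nikod\'ym derivative of $\QQ^T$ with respect to $\QQ$:
\[
S^{\delta_i}(t,T)
= \frac{1}{B(t,T)}\,\EE\Bigl[e^{-\int_t^T r_s\,\ud s}\,S^{\delta_i}(T,T)\,\Big|\,\cF_t\Bigr].
\]
Inserting $S^{\delta_i}(T,T)=\exp(c_i(T)+\gamma_i^{\top}X_T)$ from \eqref{eq:general_spread} and $r_s=\ell(s)+\lambda^{\top}X_s$, the conditional expectation factorizes over $j=1,\dots,d$ into terms of the form
\[
\EE_{X^j_t}\!\Bigl[\exp\Bigl(-(-\gamma_{i,j})X^j_{T-t}-\lambda_j\!\int_0^{T-t}\!\!X^{j}_{s}\,\ud s\Bigr)\Bigr].
\]
Each such factor is finite for all $t\le T$ thanks to \eqref{eq:general_model_life}, which by \eqref{eq:lifetime_moments} is precisely the statement that $T^{(-\gamma_{i,j},\lambda_j)}=+\infty$. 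Applying Theorem \ref{thm:affine} with $p=-\gamma_{i,j}$ and $q=\lambda_j$, and then dividing by the closed-form expression for $B(t,T)$ from Part~(i), the $\ell$-contribution cancels and the remaining $\psi^j$-integrals combine into the differences $\psi^j(v^j(\cdot,0,\lambda_j))-\psi^j(v^j(\cdot,-\gamma_{i,j},\lambda_j))$ appearing in $\cA_i(t,T)$, while the $X_t$-linear exponents combine into $\cB_i^j(T-t)=v^j(T-t,0,\lambda_j)-v^j(T-t,-\gamma_{i,j},\lambda_j)$.

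\emph{Main obstacle.} There is no deep difficulty here: the argument is a bookkeeping exercise once Theorem \ref{thm:affine} is available. The one point requiring care is the systematic justification that all the exponential moments involved are indeed finite uniformly in $t\le T$ for arbitrary $T$, so that one may interchange sum and expectation and invoke the affine transform at each maturity; this is precisely the role played by the infinite-lifetime condition \eqref{eq:general_model_life}, which must be invoked both for the $(0,\lambda_j)$-moments (bond prices) and for the $(-\gamma_{i,j},\lambda_j)$-moments (spreads).
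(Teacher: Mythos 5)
Your proposal is correct and follows essentially the same route the paper intends: unpack \eqref{eq:gen_bond_price} and the $\QQ^T$-martingale property, factor over the independent components, and apply Theorem~\ref{thm:affine} with $(p,q)=(0,\lambda_j)$ for the bond and $(p,q)=(-\gamma_{i,j},\lambda_j)$ for the spread, using condition~\eqref{eq:general_model_life} for integrability. One small slip: you write that "lifetimes are non-increasing in $p$", but the implication you want ($T^{(-\gamma_{i,j},\lambda_j)}=+\infty \Rightarrow T^{(0,\lambda_j)}=+\infty$ when $-\gamma_{i,j}\le 0$) needs $T^{(p,q)}$ \emph{non-decreasing} in $p$; in fact the detour is unnecessary, since $\phi(0)=0\le\lambda_j$ gives $p_{\lambda_j}\le 0$ and hence $T^{(0,\lambda_j)}=+\infty$ automatically from Theorem~\ref{thm:lifetime}, independently of~\eqref{eq:general_model_life}.
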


Linear fixed-income products, such as forward rate agreements, interest rate swaps, basis swaps, can be priced in closed form by relying on the explicit expressions for OIS zero-coupon bond prices and forward multiplicative spreads given in Proposition \ref{prop:bond_spreads} together with the valuation formulae stated in \cite[Appendix A]{CFGaffine}.
Non-linear derivatives such as caps, floors and swaptions can be efficiently priced via Fourier techniques, as illustrated in Sections \ref{sec:caplet_pricing}-\ref{sec:caplet_pricing_quant} in the case of caplets. 

\begin{rem}[Futures convexity adjustments]	\label{rem:convexity}
A further advantage of CBI-driven multi-curve models consists in the possibility of computing in closed form {\em futures convexity adjustments}. We recall that the futures convexity adjustment $C(t,T,\delta)$ is defined as the difference at time $t$ between future and forward Ibor rates for the same reference period $[T,T+\delta]$ (see \cite{GM10,mer_futures}). 
More specifically,
\[
C(t,T,\delta_i)  
:= \EE[L(T,T,\delta_i)|\cF_t] -  \EE^{T+\delta_i}[L(T,T,\delta_i)|\cF_t]
= \EE[L(T,T,\delta_i)|\cF_t] - L(t,T,\delta_i),
\]
for $i=1,\ldots,m$ and $0\leq t\leq T<+\infty$, where the second equality follows from \eqref{eq:gen_FRA_rate}. 
The forward Ibor rate $L(t,T,\delta_i)$ can be directly obtained from \eqref{eq:lois} and \eqref{eq:fwd_spread} together with Proposition \ref{prop:bond_spreads}. By applying the affine transform formula \eqref{eq:aff_transform} again with Proposition \ref{prop:bond_spreads}, the future Ibor rate can be explicitly computed as
\[
\EE[L(T,T,\delta_i)|\cF_t]
= \frac{1}{\delta_i}\left(e^{c_i(T)-\cA_0(T,T+\delta_i)-\sum_{j=1}^d\int_0^{T-t}v(s,\cB_0^j(\delta_i)-\gamma_{i,j},0)\ud s - \sum_{j=1}^dv(T-t,\cB_0^j(\delta_i)-\gamma_{i,j},0)X^j_t}-1\right).
\]
\end{rem}

In typical market scenarios, multiplicative spreads are greater than one and increasing with respect to the tenor. As shown in the following proposition, these features can be easily reproduced.
While this result can be recovered as a special case of the general statement in \cite[Proposition 3.7]{CFGaffine}, we provide a short self-contained proof that relies on the specific properties of our setting.

\begin{prop}	\label{prop:general_properties}
Let $(X,\ell,\lambda,\mathbf{c},\boldsymbol{\gamma})$ generate a CBI-driven multi-curve model. Then, for every $i=1,\ldots,m$, the following hold: 
\begin{enumerate}
\item[(i)] 
if $\gamma_i\in\R^d_+$ and $c_i(t)\geq0$, for all $t\geq0$, then $S^{\delta_i}(t,T)\geq 1$ a.s. for all $0\leq t\leq T<+\infty$;
\item[(ii)] 
if $\gamma_{i+1}-\gamma_i\in\R^d_+$ and $c_i(t)\leq c_{i+1}(t)$, for all $t\geq0$, then $S^{\delta_i}(t,T)\leq S^{\delta_{i+1}}(t,T)$ a.s. for all $0\leq t\leq T<+\infty$.
\end{enumerate}
\end{prop}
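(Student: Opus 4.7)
The plan is to work directly with the closed-form representation
\[
\log S^{\delta_i}(t,T) = \cA_i(t,T) + \cB_i(T-t)^{\top}X_t
\]
provided by Proposition \ref{prop:bond_spreads}(ii), and to reduce both assertions to monotonicity statements about (a) the ODE flow $p\mapsto v^j(t,p,q)$ and (b) the function $\psi^j$ on its extended domain $\cY^j$, together with the fact that each component $X^j_t$ of a CBI process takes values in $[0,+\infty)$.

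The two auxiliary monotonicity facts would be established first. For (a), fix $q=\lambda_j\geq 0$ and view \eqref{eq:ODE_v} as $\dot v = f(v)$ with $f(y):=q-\phi^j(y)$. By Assumption \ref{ass:Lipschitz}, $f$ is locally Lipschitz on $\cY^j$, so solutions of the ODE are ordered by their initial conditions: if $p_1\leq p_2$ in $\cY^j$, then $v^j(t,p_1,q)\leq v^j(t,p_2,q)$ on the intersection of their lifetimes. This is the standard comparison principle for scalar ODEs; nonstrict ordering is preserved because two solutions cannot cross once they coincide. For (b), differentiating \eqref{eq:immigration} yields $(\psi^j)'(z)=\beta^j+\int_0^{+\infty}u\,e^{-zu}\nu^j(\ud u)\geq 0$ on all of $\cY^j$, so $\psi^j$ is non-decreasing on $\cY^j$.

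Granted these facts, part (i) follows as follows. By hypothesis $-\gamma_{i,j}\leq 0$ for every $j$; since $0,-\gamma_{i,j}\in\cY^j$ (the latter by \eqref{eq:general_model_life}), the comparison principle gives $v^j(s,-\gamma_{i,j},\lambda_j)\leq v^j(s,0,\lambda_j)$ for all admissible $s$. Therefore $\cB_i^j(T-t)=v^j(T-t,0,\lambda_j)-v^j(T-t,-\gamma_{i,j},\lambda_j)\geq 0$, and combined with $X^j_t\geq 0$ this shows $\cB_i(T-t)^{\top}X_t\geq 0$. Monotonicity of $\psi^j$ then yields $\psi^j(v^j(s,0,\lambda_j))-\psi^j(v^j(s,-\gamma_{i,j},\lambda_j))\geq 0$, so the integral term in $\cA_i(t,T)$ is non-negative, and the assumption $c_i(T)\geq 0$ gives $\cA_i(t,T)\geq 0$. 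Hence $\log S^{\delta_i}(t,T)\geq 0$ almost surely.

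For part (ii), subtract the closed-form expressions:
\[
\log S^{\delta_{i+1}}(t,T) - \log S^{\delta_i}(t,T)
= \bigl(\cA_{i+1}(t,T)-\cA_i(t,T)\bigr) + \bigl(\cB_{i+1}(T-t)-\cB_i(T-t)\bigr)^{\top}X_t.
\]
Since $-\gamma_{i+1,j}\leq -\gamma_{i,j}$ for every $j$, the ODE comparison gives $v^j(s,-\gamma_{i+1,j},\lambda_j)\leq v^j(s,-\gamma_{i,j},\lambda_j)$, so
\[
\cB_{i+1}^j(T-t)-\cB_i^j(T-t) = v^j(T-t,-\gamma_{i,j},\lambda_j)-v^j(T-t,-\gamma_{i+1,j},\lambda_j)\geq 0,
\]
and monotonicity of $\psi^j$ ensures that the integral part of $\cA_{i+1}(t,T)-\cA_i(t,T)$ is non-negative; adding the hypothesis $c_{i+1}(T)\geq c_i(T)$ and using $X_t\geq 0$ componentwise closes the argument. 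The only delicate point is verifying that both initial data $0,-\gamma_{i,j},-\gamma_{i+1,j}$ lie in $\cY^j$ and that all relevant ODE solutions are global—this is exactly what condition \eqref{eq:general_model_life} in Definition \ref{def:general_model} guarantees, so the comparison step is unobstructed.
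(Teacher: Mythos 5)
Your proof is correct and follows essentially the same route as the paper's: both rest on the monotonicity of $p\mapsto v^j(t,p,q)$, the increasingness of $\psi^j$, the positivity of $X$, and the explicit formula from Proposition~\ref{prop:bond_spreads}(ii). The only cosmetic difference is that the paper cites \cite[Proposition~3.1]{Li} for the monotonicity of $v^j$, whereas you derive it from the scalar ODE comparison principle, which is an equally valid way to obtain the same fact.
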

\begin{proof}
Arguing similarly as in \cite[Proposition 3.1]{Li}, it can be shown that, for every $j=1,\ldots,d$, $q\in\R_+$ and $t\geq0$, the function $\cY^j\ni p\mapsto v^j(t,p,q)$ is strictly increasing.
Moreover, by \eqref{eq:immigration}, each immigration rate $\psi^j$ is an increasing function. 
By relying on these facts and since $X$ takes values in $\R^d_+$, the result follows as a direct consequence of part (ii) of Proposition \ref{prop:bond_spreads}.
\end{proof}

\begin{rem}[On the possibility of negative rates]	\label{rem:negativeOIS}
In recent years, negative short rates have been observed to coexist with spreads which are greater than one. 
Since the function $\ell$ in \eqref{eq:general_rate} is allowed to take negative values, our framework does not exclude this possibility. Moreover, a slight extension of Definition \ref{def:general_model} permits to generate OIS short rates which are not bounded from below by the deterministic function $\ell$. Indeed, it suffices to replace $X$ with a $(d+1)$-dimensional process $X'=(X,Y)$ such that $X'$ is an affine process and $\QQ(Y_t<0)>0$, for all $t\geq0$. Specification \eqref{eq:general_rate} can then be replaced by $\rois_t=\ell(t)+\lambda^{\top}X_t + Y_t$, while multiplicative spreads are given as in \eqref{eq:general_spread}. Note that $Y$ is not restricted to be independent of $X$.
A simple extension of this type has been tested in our calibration to market data (see Section \ref{sec:calibration_results} below).
\end{rem}

\section{A Tractable Specification via Tempered Alpha-Stable CBI Processes}	\label{sec:flow}

In this section, we introduce a multi-curve model driven by a {\em flow of tempered $\alpha$-stable CBI processes}. The proposed specification is parsimonious, captures the most relevant features of post-crisis interest rate markets and, at the same time, allows for efficient pricing of caplets via Fourier and quantization techniques, as shown in Sections \ref{sec:caplet_pricing}-\ref{sec:caplet_pricing_quant}.
The empirical performance of this specification will be studied in Section \ref{sec:calibration} by calibration to market data.
We start by introducing the probabilistic properties of the class of tempered $\alpha$-stable CBI processes. For better readability, the proofs of the results stated in Section \ref{sec:alpha_stable} are deferred to Appendix \ref{sec:proofs}.

\subsection{Tempered alpha-stable CBI processes}	\label{sec:alpha_stable}

The specific features of a CBI process are determined by its branching mechanism $\phi$ and immigration rate $\psi$, notably by the measures $\pi$ and $\nu$ appearing in \eqref{eq:branching}-\eqref{eq:immigration}. In the following definition, we introduce a tractable and flexible specification, which is particularly well-suited to the modelling of multiple yield curves.

\begin{defn}	\label{def:alpha_stable}
A CBI($\phi,\psi$) process $X=(X_t)_{t\geq0}$ with $X_0=x$ is called a {\em tempered $\alpha$-stable} CBI process if $\nu(\ud u)=0$ and
\be	\label{eq:alpha_stable}
\pi(\ud u) = C\frac{e^{- \theta u}}{u^{1+\alpha}}\ind_{\{u>0\}}\ud u,
\ee
where $\theta>0$,  $\alpha<2$ and $C$ is a suitable normalizing constant.
\end{defn}

For a tempered $\alpha$-stable CBI process, it holds that $\cY=[-\theta,+\infty)$. Indeed, since $\nu(\ud u)=0$, we have that $\int_1^{+\infty}e^{\theta u}\pi(\ud u)=C/\alpha$, while $\int_1^{+\infty}e^{(\theta+\epsilon)u}\pi(\ud u)=+\infty$ for every $\epsilon>0$.
The measure $\pi$ given in \eqref{eq:alpha_stable} corresponds to the L\'evy measure of a spectrally positive (generalized) tempered $\alpha$-stable compensated L\'evy process $Z=(Z_t)_{t\geq0}$, whose characteristic function is given by
\[
\EE[e^{\im uZ_t}]
= \exp\left(C\,\Gamma(-\alpha)\,\theta^{\alpha}\left(\left(1-\frac{\im u}{\theta}\right)^{\alpha}-1+\alpha\frac{\im u}{\theta}\right)t\right),
\]
for all $u\in\R$ and $t\geq0$, as long as $\alpha\notin\{0,1\}$, see \cite[Proposition 4.2]{ContTankov}, where $\Gamma$ denotes the Gamma function extended to $\R\setminus\mathbb{Z}_-$ (see \cite[Section 1.1]{Lebedev}).
The process $Z$ has different path properties depending on the choice of the parameter $\alpha$:
\begin{itemize}
\item if $\alpha<0$, then $Z$ is a compensated compound Poisson process, since $\pi(\R_+)<+\infty$;
\item if $\alpha\in[0,1)$, then $Z$ has infinite activity and finite variation, since $\int_0^1u\,\pi(\ud u)<+\infty$;
\item if $\alpha\in[1,2)$, then $Z$ has infinite activity and infinite variation.
\end{itemize}
Choosing $\alpha=0$ yields the L\'evy measure of a Gamma subordinator, while $\alpha=1/2$ corresponds to an inverse Gaussian subordinator.
Small positive values of $\alpha$ are likely to generate jump clustering effects. Indeed, in view of \eqref{eq:alpha_stable}, small positive values of $\alpha$ imply that the random measure $M$ is more likely to generate large jumps. Since the jump intensity depends on the current value of the process $X$ (self-exciting behavior), large jumps increase the likelihood of further jumps, thus generating jump clustering phenomena (compare with Remark \ref{rem:jump_cluster}).
The tempering parameter $\theta$ determines the tail behavior of the jump measure $\pi$ and will be indispensable to ensure the finiteness of exponential moments (compare with Proposition \ref{prop:properties} and Remark \ref{rem:non_tempered} below).

In the following section, we shall focus on the case $\alpha\in(1,2)$. In this case, it is convenient to choose the normalizing constant $C$  as follows, for some $\eta>0$:
\be	\label{eq:C}
C(\alpha,\eta) := -\frac{\eta^{\alpha}}{\Gamma(-\alpha)\cos(\alpha\pi/2)}.
\ee
As will become clear in Section \ref{sec:model_spec}, the constant $\eta$ will play the role of a volatility parameter determining the impact of jumps in the dynamics of the model.

Similarly as in \cite{JMS17}, small values of $\alpha$ in the range $(1,2)$ increase the likelihood that the process $X$ spends prolonged periods of time at relatively low levels. Indeed, a smaller value of $\alpha$ implies a stronger compensation effect in $\tildeM$, corresponding to a stronger negative drift after a large jump of $X$. 
In turn, since the jump intensity is proportional to the current level of $X$, this leads to a sharp reduction of the jump intensity, thus increasing the likelihood of a persistence of low values for the process $X$.
This effect is especially important in view of our modelling purposes, since spreads and rates tend to exhibit prolonged periods of low values, as mentioned in the introduction.

Since $\nu(\ud u)=0$ (see Definition \ref{def:alpha_stable}), the immigration rate $\psi$ of a tempered $\alpha$-stable CBI process reduces to $\psi(z)=\beta z$. The branching mechanism $\phi$ is explicitly described in the following lemma.

\begin{lem}	\label{lem:branch_alpha}
For $\theta\geq0$, $\alpha\in(1,2)$ and $C=C(\alpha,\eta)$, the branching mechanism $\phi$ of a tempered $\alpha$-stable CBI process is a convex function on $[-\theta,+\infty)$ and is explicitly given by
\be	\label{eq:branch_alpha}
\phi(z) = bz + \frac{\sigma^2}{2}z^2 
+ \eta^{\alpha}\frac{\theta^{\alpha}+z\alpha\theta^{\alpha-1}-(z+\theta)^{\alpha}}{\cos(\alpha\pi/2)},
\qquad\text{ for all }z\geq-\theta.
\ee
The branching mechanism $\phi$ is decreasing with respect to the tempering parameter $\theta$.
Moreover, Assumption \ref{ass:Lipschitz} is satisfied.
\end{lem}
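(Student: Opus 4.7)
The plan is to first derive the closed form \eqref{eq:branch_alpha} by direct integration, then read off the remaining assertions from the formula. Starting from \eqref{eq:branching} with the density \eqref{eq:alpha_stable}, it suffices to evaluate
\[
I(z) := \int_0^{+\infty}(e^{-zu}-1+zu)\,e^{-\theta u}u^{-1-\alpha}\,\ud u,
\qquad z\geq 0.
\]
Differentiating twice under the integral sign (justified by the bound $|e^{-zu}-1+zu|\leq z^2u^2/2$ near the origin and the exponential decay at infinity) reduces the computation to a Gamma integral,
\[
I''(z) = \int_0^{+\infty}e^{-(z+\theta)u}u^{1-\alpha}\,\ud u = \Gamma(2-\alpha)(z+\theta)^{\alpha-2}.
\]
Integrating twice with $I(0)=I'(0)=0$ and using $\Gamma(2-\alpha)=\alpha(\alpha-1)\Gamma(-\alpha)$ (two applications of $\Gamma(x+1)=x\Gamma(x)$) gives
\[
I(z) = \Gamma(-\alpha)\bigl((z+\theta)^{\alpha}-\theta^{\alpha}-\alpha\theta^{\alpha-1}z\bigr),
\]
and multiplication by $C(\alpha,\eta)$ as in \eqref{eq:C} yields \eqref{eq:branch_alpha} for $z\geq 0$. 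The formula extends to $z\in[-\theta,0)$ either by analytic continuation or by observing that the defining integral still converges, since $\alpha<2$ controls the singularity at $0$ while $z+\theta\geq 0$ keeps the tail integrable.

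Convexity is then immediate from the explicit form: differentiating \eqref{eq:branch_alpha} twice gives
\[
\phi''(z) = \sigma^2 - \frac{\eta^{\alpha}\alpha(\alpha-1)(z+\theta)^{\alpha-2}}{\cos(\alpha\pi/2)} > 0
\qquad\text{on }(-\theta,+\infty),
\]
because $\cos(\alpha\pi/2)<0$ for $\alpha\in(1,2)$ while $\alpha(\alpha-1)(z+\theta)^{\alpha-2}>0$. For the monotonicity in $\theta$, I would differentiate \eqref{eq:branch_alpha} directly in $\theta$ and obtain
\[
\frac{\partial\phi}{\partial\theta}(z) = \frac{\alpha\eta^{\alpha}}{\cos(\alpha\pi/2)}\,g(z),
\qquad g(z) := \theta^{\alpha-1}+(\alpha-1)\theta^{\alpha-2}z-(z+\theta)^{\alpha-1}.
\]
A short elementary analysis shows that $g(0)=0$ and $g'(z)=(\alpha-1)\bigl(\theta^{\alpha-2}-(z+\theta)^{\alpha-2}\bigr)$; since $x\mapsto x^{\alpha-2}$ is decreasing on $(0,+\infty)$, $g'$ is negative on $(-\theta,0)$ and positive on $(0,+\infty)$, so $z=0$ is the minimum of $g$ on $[-\theta,+\infty)$ and thus $g\geq 0$ there. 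Combined with $\cos(\alpha\pi/2)<0$, this gives $\partial\phi/\partial\theta\leq 0$.

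For Assumption \ref{ass:Lipschitz}, note that $\nu=0$ and $\cY=[-\theta,+\infty)$, so $\ell\vee\kappa=-\theta$; the integrand appearing in the assumption then reduces to $z\cdot e^{\theta z}\cdot C(\alpha,\eta)e^{-\theta z}z^{-1-\alpha}=C(\alpha,\eta)z^{-\alpha}$, which is integrable on $[1,+\infty)$ precisely because $\alpha>1$. I expect the only non-routine step is the integral evaluation that produces \eqref{eq:branch_alpha}; everything else reduces to straightforward differentiation and careful sign tracking, with the negativity of $\cos(\alpha\pi/2)$ on $(1,2)$ playing the decisive role throughout.
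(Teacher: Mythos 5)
Your proof is correct, and for the key computation it takes a genuinely different route from the paper. To evaluate $I(z)$ the paper expands $e^{-zu}-1+zu$ as a power series, interchanges sum and integral, reduces each term to a Gamma-function, and recognizes the resulting series as the binomial expansion of $(1+z/\theta)^{\alpha}$; you instead differentiate twice under the integral sign to collapse the problem to the elementary Gamma-integral $\int_0^{+\infty}e^{-(z+\theta)u}u^{1-\alpha}\,\ud u$, then integrate back using $I(0)=I'(0)=0$ and $\Gamma(2-\alpha)=\alpha(\alpha-1)\Gamma(-\alpha)$. Both approaches need an extension step (yours from $z\geq0$ to $[-\theta,+\infty)$; the paper's from $|z|<\theta$ to the full interval), handled in either case by analyticity/continuity; your route avoids having to worry about the domain of convergence of the binomial series altogether. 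For monotonicity in $\theta$, the paper invokes Bernoulli's inequality as a one-liner, whereas you verify the equivalent inequality $g\geq0$ by locating the minimum of $g$ at $z=0$ — slightly longer but elementary and self-contained. For Assumption~\ref{ass:Lipschitz}, you check the integral condition $\int_1^{+\infty}z\,e^{\theta z}\pi(\ud z)<+\infty$ directly, while the paper notes that $\phi\in C^1([-\theta,+\infty))$; these are equivalent. One minor point: the paper's proof explicitly treats only $\theta>0$ and delegates $\theta=0$ to a reference, whereas your derivation of \eqref{eq:branch_alpha} in fact goes through unchanged at $\theta=0$ (where the formula reduces to the non-tempered $\alpha$-stable case), so your argument is slightly more uniform there.
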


The following proposition states two important properties of a tempered $\alpha$-stable CBI process.

\begin{prop}	\label{prop:properties}
Let $X=(X_t)_{t\geq0}$ be a tempered $\alpha$-stable CBI process, with $X_0=x>0$, $\theta\geq0$, $\alpha\in(1,2)$ and $C=C(\alpha,\eta)$. Then the following hold:
\begin{itemize}
\item[(i)] $\EE[e^{\gamma X_t}]<+\infty$ for all $\gamma\leq\theta$ and $t\geq0$ if and only if $\phi(-\theta)\leq0$;
\item[(ii)] $0$ is an inaccessible boundary for $X$ if and only if $2\beta\geq\sigma^2$.
\end{itemize}
\end{prop}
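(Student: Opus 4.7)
For part (i), the plan is to translate the statement into a claim about the lifetime $T^{(p,0)}$ via \eqref{eq:lifetime_moments}, then apply Theorem \ref{thm:lifetime}. For a tempered $\alpha$-stable CBI process we have $\cY=[-\theta,+\infty)$ (noted immediately after Definition \ref{def:alpha_stable}); since $\nu=0$, also $\kappa=-\infty$ and therefore $\ell\vee\kappa=-\theta$, with $\psi(-\theta)=-\beta\theta\in\R$ and $\phi(-\theta)\in\R$ by Lemma \ref{lem:branch_alpha}. Writing $p=-\gamma$, the finiteness of $\EE[e^{\gamma X_t}]$ for all $\gamma\leq\theta$ and $t\geq 0$ becomes $T^{(p,0)}=+\infty$ for all $p\in[-\theta,+\infty)$. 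The last assertion of Theorem \ref{thm:lifetime} then directly yields the equivalence with $\phi(-\theta)\leq 0$: the $(\Leftarrow)$ direction is immediate, and for $(\Rightarrow)$ I would argue by contradiction, using that if $\phi(-\theta)>0$ then continuity of $\phi$ produces some $\bar p>-\theta$ with $\phi>0$ on $[-\theta,\bar p]$, so the integral $T^{(\bar p,0)}=\int_{-\theta}^{\bar p}\ud y/\phi(y)$ of \eqref{eq:lifetime} is finite, contradicting the assumed finiteness of $\EE[e^{-\bar p X_t}]$ for large $t$.

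For part (ii), since $\nu=0$ forces $\psi(z)=\beta z$, the plan is to exploit the fact that all branching jumps are positive (so they cannot by themselves drive $X$ down to $0$) and that near $0$ the dynamics are dominated by the continuous, CIR-like component. A natural route is to perform a Lyapunov analysis with the candidate $f(x)=x^{-\rho}$, $\rho>0$. Applying the infinitesimal generator $L$ of $X$ and treating the jump integral via the change of variable $u=z/x$, one obtains
\[
Lf(x)=\rho\left(\frac{\sigma^2(\rho+1)}{2}-\beta\right)x^{-\rho-1}+b\rho\,x^{-\rho}+O(x^{1-\rho-\alpha}),\qquad x\to 0^+,
\]
where the jump contribution $O(x^{1-\rho-\alpha})$ is subdominant to the leading $x^{-\rho-1}$ term since $\alpha<2$. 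When $2\beta>\sigma^2$, choosing $\rho\in(0,2\beta/\sigma^2-1)$ makes the leading coefficient strictly negative, so that $Lf$ is bounded above near $0$; a standard supermartingale/localization argument applied to $\tau_0:=\inf\{t\geq 0:X_t=0\}$ then yields $\QQ(\tau_0<+\infty)=0$. The critical case $2\beta=\sigma^2$ would be recovered by refining the Lyapunov function (e.g.\ $f(x)=(\log(1/x))^\kappa$ for suitable $\kappa$) or by an approximation from the strict case. For the converse, $2\beta<\sigma^2$, I would use a comparison with the CIR diffusion: since jumps of $X$ are positive and their intensity is proportional to $X$ and hence vanishes near $0$, the excursions of $X$ near $0$ are dictated by the continuous component, whose classical Feller analysis makes $0$ accessible precisely when $2\beta<\sigma^2$.

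The main obstacle lies in part (ii). Controlling the jump term uniformly near $0$ is manageable (the power $x^{1-\rho-\alpha}$ arises cleanly from the scaling of the tempered $\alpha$-stable L\'evy measure together with $\alpha\in(1,2)$), but the critical case $2\beta=\sigma^2$ and the converse direction are delicate because the jump martingale has infinite variation for $\alpha\in(1,2)$, so a direct pathwise comparison with a CIR diffusion is not available. The cleanest execution is likely to invoke a general inaccessibility criterion for CBI processes (of Foucart--Ma type), which under our standing assumptions ($\nu=0$, $\sigma^2>0$, $\int(u\wedge u^2)\pi(\ud u)<+\infty$) reduces the question exactly to the classical Feller threshold $2\beta\geq\sigma^2$ inherited from the continuous component.
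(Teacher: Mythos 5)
Your treatment of part (i) coincides with the paper's: write the moment condition in terms of $T^{(p,0)}$ via \eqref{eq:lifetime_moments} and invoke the last assertion of Theorem \ref{thm:lifetime} with $\ell\vee\kappa=-\theta$ and $\psi(-\theta)=-\beta\theta$ finite. That is exactly the paper's argument, and your contradiction step for $(\Rightarrow)$ is what the cited theorem already encodes.

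For part (ii) your primary route (Lyapunov function $f(x)=x^{-\rho}$, scaling the jump integral to show the contribution is $O(x^{1-\rho-\alpha})$) is genuinely different from the paper's, and you yourself flag its gaps: the critical case $2\beta=\sigma^2$ and the accessibility direction $2\beta<\sigma^2$ do not close by this method, since the jump martingale has infinite variation for $\alpha\in(1,2)$ and a pathwise CIR comparison is unavailable. The paper does not attempt a Lyapunov argument at all; it invokes the integral criterion used in \cite[Proposition 3.4]{JMS17} (a Duhalde--Foucart--Ma type zero-hitting test, formulated in terms of $\int^{+\infty}\phi(z)^{-1}\exp(\int^z\psi(u)/\phi(u)\,\ud u)\,\ud z$) and verifies its hypothesis by the two observations $\phi(z)\geq bz+\tfrac{\sigma^2}{2}z^2$ for $z\geq0$ and $\psi(z)/\phi(z)\geq z^{-1}\bigl(1+O(z^{\alpha-2})\bigr)$ as $z\to+\infty$ when $2\beta\geq\sigma^2$. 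You do point to the Foucart--Ma criterion as the ``cleanest execution,'' but the final sentence of your proposal treats the reduction to the Feller threshold as automatic; it is not. The quantity that actually enters the criterion is the asymptotic behavior of $\psi(z)/\phi(z)$ as $z\to+\infty$, and the reduction to the CIR threshold $2\beta\geq\sigma^2$ holds precisely because the jump contribution to $\phi(z)$ in \eqref{eq:branch_alpha} grows only like $z^{\alpha}$ with $\alpha<2$, hence is subdominant to $\tfrac{\sigma^2}{2}z^2$. That computation is the substantive content of part (ii) and must be carried out explicitly, as the paper does.
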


\begin{rem}	\label{rem:complex_alpha}
In view of Remark \ref{rem:complex}-(2), part (i) of Proposition \ref{prop:properties} implies that, if $\phi(-\theta)\leq0$, then for a tempered $\alpha$-stable CBI process the affine transform formula \eqref{eq:aff_transform} can be always extended to the complex domain for all $p\in\C$ such that $\Re(p)>-\theta$.
This property will play an important role for the application of Fourier-based pricing methods (see Section \ref{sec:caplet_pricing}).
\end{rem}

\begin{rem}[The non-tempered case]	\label{rem:non_tempered}
The non-tempered case corresponds to $\theta=0$, in which case $\alpha\in(1,2)$ is required.
In this case, an $\alpha$-stable CBI process $X=(X_t)_{t\geq0}$ with $X_0=x$ can be represented as the solution to the following SDE:
\[
X_t = x + \int_0^t(\beta-bX_s)\ud s + \sigma\int_0^t\sqrt{X_s}\ud B_s + \eta\int_0^tX_{s-}^{1/\alpha}\ud Z_s,
\qquad\text{ for all }t\geq0,
\]
where $Z=(Z_t)_{t\geq0}$ is a spectrally positive $\alpha$-stable compensated L\'evy process, see \cite[Theorem 9.32]{Li}. 
This specification has been recently proposed as a model for the short interest rate in \cite{JMS17} and for stochastic volatility in \cite{JMSZ18}.
Observe that for the limit case $\alpha=2$ and $C=C(2,\eta)$, the process $X$ reduces to a classical CIR process with volatility  $\sqrt{\sigma^2+2\eta^2}$. 
This is due to the fact that, for $\theta=0$, the limit case $\alpha=2$ and $C=C(2,\eta)$ yields a Gaussian distribution for $Z$, see \cite[Section 3.7]{ContTankov}.
It is important to note that, in the non-tempered case, for any $\alpha\in(1,2)$, the process $X$ does not admit finite exponential moments of any order, meaning that $\EE[e^{\gamma X_t}]=+\infty$ for all $\gamma>0$ and $t>0$.
The finiteness of exponential moments represents an indispensable requirement of our modelling framework.
\end{rem}

\begin{rem}
In view of \cite[Proposition 4.1]{JMS17}, tempered $\alpha$-stable CBI processes are closed under a  wide class of equivalent changes of probability. More specifically, it can be shown that if $X$ is a tempered $\alpha$-stable process under $\QQ$ and $\QQ'$  is a probability measure equivalent to $\QQ$ with density process $\ud\QQ'|_{\cF_t}/\ud\QQ|_{\cF_t}=\mathcal{E}(\varrho\int_0^{\cdot}\int_0^{X_{s-}}W(\ud s,\ud u)+\int_0^{\cdot}\int_0^{+\infty}\!\int_0^{X_{s-}}(e^{-\xi z}-1)\tildeM(\ud s,\ud z,\ud u))_t$, for all $t\geq0$, for some $\varrho\in\R$ and $\xi\in\R_+$, then $X$ remains a tempered $\alpha$-stable process under $\QQ'$ (with different mean-reversion rate and tempering parameter, depending on the values of $\varrho$ and $\xi$). 
In particular, this permits to construct a tempered $\alpha$-stable CBI process from a non-tempered $\alpha$-stable CBI process by means of an equivalent change of probability.
\end{rem}

We conclude this subsection with the following result, which characterizes the ergodic distribution of a tempered $\alpha$-stable CBI process.
This result can be useful for analyzing the long-term behavior of the model.
We recall that, according to the terminology of \cite[Chapter 3]{Li}, the branching mechanism $\phi$ is said to be {\em strictly subcritical} if $b>0$.

\begin{prop}	\label{prop:ergodic}
Let $X=(X_t)_{t\geq0}$ be a tempered $\alpha$-stable CBI process, with $X_0=x$, $\theta\geq0$, $\alpha\in(1,2)$ and $C=C(\alpha,\eta)$.
If $\phi$ is strictly subcritical, then $(P_t(\cdot,x))_{t\geq0}$ converges weakly to a stationary distribution $\rho$ with Laplace transform
\be	\label{eq:ergodic_laplace}
\mathcal{L}_{\rho}(p) := \int_0^{+\infty}e^{-py}\rho(\ud y)
= \exp\left(-\beta\int_0^p\frac{z}{\phi(z)}\ud z\right),
\qquad\text{ for }p>p_0,
\ee
where $p_0$ is defined as in Theorem \ref{thm:lifetime} for $q=0$.
The first moment of $\rho$ is given by 
\be	\label{eq:ergodic_mean}
\int_0^{+\infty}y\rho(\ud y) = \frac{\beta}{b}.
\ee
Moreover, the process $X$ is exponentially ergodic, in the sense that
\[
\|P_t(\cdot,x) - \rho(\cdot)\| \leq C\bigl(x+\beta/b\bigr)e^{-bt},
\qquad\text{ for all }t\geq1,
\]
for a positive constant $C$ and where $\|\cdot\|$ denotes the total variation norm.
\end{prop}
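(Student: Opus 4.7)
My plan combines three ingredients: the affine transform formula from Theorem \ref{thm:affine}, a change of variable in the resulting time integral, and a Foster--Lyapunov argument. To establish weak convergence and identify the Laplace transform of $\rho$, I first invoke strict subcriticality ($\phi'(0+)=b>0$), which by Theorem \ref{thm:lifetime} gives $T^{(p,0)}=+\infty$ for every $p\geq p_0$. Applying \eqref{eq:aff_transform} with $q=0$ and $\psi(z)=\beta z$ then yields
\[
\EE\bigl[e^{-pX_t}\bigr] = \exp\Bigl(-xv(t,p,0) - \beta\int_0^t v(s,p,0)\,\ud s\Bigr),
\qquad t\geq 0.
\]
Strict subcriticality also implies that $v(t,p,0)\to 0$ as $t\to+\infty$ for every $p>p_0$, since the origin is the attracting equilibrium of the ODE $\partial_t v=-\phi(v)$ on $(p_0,+\infty)$. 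The change of variable $y=v(s,p,0)$, with $\ud y=-\phi(y)\,\ud s$, transforms the time integral into
\[
\int_0^t v(s,p,0)\,\ud s = \int_{v(t,p,0)}^{p}\frac{y}{\phi(y)}\,\ud y \xrightarrow[t\to\infty]{} \int_0^p \frac{y}{\phi(y)}\,\ud y.
\]
Passing to the limit gives $\EE[e^{-pX_t}]\to \exp(-\beta\int_0^p y/\phi(y)\,\ud y)$ for every $p\geq 0$, and L\'evy's continuity theorem then yields weak convergence of $P_t(\cdot,x)$ to $\rho$ with the claimed Laplace transform. Extension to $p\in(p_0,0)$ follows by combining Fatou's lemma (which forces $\int e^{-py}\rho(\ud y)<+\infty$) with analyticity of both sides of \eqref{eq:ergodic_laplace} on $(p_0,+\infty)$.

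The first moment is then obtained by differentiating the log-Laplace transform: one has $(\log\mathcal{L}_\rho)'(p)=-\beta p/\phi(p)$, and since $\phi(0)=0$ with $\phi'(0)=b$, L'H\^opital yields $\lim_{p\to 0}p/\phi(p)=1/b$. Together with $\mathcal{L}_\rho(0)=1$, this gives $-\mathcal{L}_\rho'(0)=\beta/b$, and the differentiation under the integral sign is justified by analyticity of $\mathcal{L}_\rho$ on the open set $(p_0,+\infty)$, which contains a neighborhood of $0$.

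For exponential ergodicity, I would employ a Foster--Lyapunov approach with the linear Lyapunov function $V(x):=1+x$. The infinitesimal generator $\cA$ of $X$, read off from \eqref{eq:SDE_CBI}, satisfies $\cA V(x)=\beta-bx$, producing the drift condition $\cA V(x)\leq -bV(x)+(\beta+b)$ and hence $\EE_x[V(X_t)]\leq V(x)+\beta/b$; this is the source of the prefactor proportional to $x+\beta/b$. To convert this moment bound into total variation convergence at rate $b$, I would verify that compact subsets of $[0,+\infty)$ are petite for the transition semigroup by combining the strong Feller property with the irreducibility ensured by the diffusion and jump components of \eqref{eq:SDE_CBI}, after which Meyn--Tweedie's ergodic theorem delivers the stated geometric rate with constant proportional to $V(x)=1+x$. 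This last conversion from the Lyapunov/moment bound to a total variation bound is where I expect the main technical difficulty, since the required minorization must be extracted from the specific structure of the tempered $\alpha$-stable transition semigroup rather than being available off the shelf.
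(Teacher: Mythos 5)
Your treatment of weak convergence and the Laplace transform, and your computation of the first moment, follow essentially the same route as the paper: the affine transform with $q=0$, the substitution $y=v(s,p,0)$, the limit $v(t,p,0)\to 0$ under strict subcriticality, and differentiation of $\log\mathcal{L}_\rho$ at zero. (The paper in fact outsources the $p\geq0$ case to \cite[Corollary 3.21, Theorem 3.20]{Li} and only carries out the ODE/change-of-variable computation to extend to $p\in(p_0,0)$; your self-contained version with Fatou plus analytic continuation is, if anything, a bit more careful about why the limiting Laplace transform equals $\mathcal{L}_\rho$ for $p<0$, where $y\mapsto e^{-py}$ is unbounded.)

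The exponential ergodicity step is where you diverge genuinely, and where the gap sits. The paper bypasses the Meyn--Tweedie machinery entirely: it verifies the elementary integrability condition $\int_c^{+\infty}\phi(z)^{-1}\,\ud z<+\infty$ (which follows from $\phi(z)\geq bz+\tfrac{\sigma^2}{2}z^2$, or more generally from the superlinear $z^\alpha$ growth of the jump part for $\alpha>1$), and then invokes \cite[Theorem 2.5]{LM15} (see also \cite[Theorem 10.5]{Li19}), a result tailored to CBI processes that delivers total-variation exponential ergodicity with exactly the claimed rate $b$ and prefactor. Your Foster--Lyapunov plan with $V(x)=1+x$ and drift inequality $\cA V\leq -bV+(\beta+b)$ is sound in outline, but the step you yourself flag --- showing that compact sets are petite, or equivalently extracting a local Doeblin minorization from the tempered $\alpha$-stable transition semigroup --- is precisely the technical content that must be supplied and is not. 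Note in particular that one cannot lean on a strong Feller property coming from the Brownian term, since $\sigma=0$ is permitted; one would need to argue via the jump component, and that argument is nontrivial. Absent that verification, your ergodicity claim remains a program rather than a proof; the paper's route is shorter precisely because the minorization work has already been carried out in the cited reference. A smaller point: your drift inequality would naturally produce a prefactor $C(1+x)$ rather than $C(x+\beta/b)$, but since the constant $C$ is unspecified these statements are interchangeable.
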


\subsection{Model specification}	\label{sec:model_spec}

In this section, we present a tractable specification of a multi-curve model driven by a flow of CBI processes, using the class of tempered $\alpha$-stable processes introduced in Section \ref{sec:alpha_stable}.
As in Section \ref{sec:prelim_CBI}, let $(\Omega,\cF,\FF,\QQ)$ be a filtered probability space supporting a white noise $W(\ud s,\ud u)$ on $(0,+\infty)^2$ with intensity $\ud s\,\ud u$ and a Poisson time-space random measure $M(\ud s,\ud z,\ud u)$ on $(0,+\infty)^3$ with intensity $\ud s\,\pi(\ud z)\,\ud u$. 
As in Section \ref{sec:rates_spreads}, we consider a set of tenors $\cD=\{\delta_1,\ldots,\delta_m\}$. For each $i=1,\ldots,m$, let the process $Y^i=(Y^i_t)_{t\geq0}$ be the unique strong solution to the following stochastic integral equation (see Proposition \ref{prop:CBI}):
\be	\label{eq:flow}
Y^i_t = y^i_0 + \int_0^t(\beta(i)-b Y^i_{s-})\ud s + \sigma\int_0^t\int_0^{Y^i_{s-}}W(\ud s,\ud u) + \eta\int_0^t\int_0^{+\infty}\!\int_0^{Y^i_{s-}}z\tildeM(\ud s,\ud z,\ud u),
\ee
for all $t\geq0$, with initial condition $y^i_0\in\R_+$, for all $i=1,\ldots,m$, and where
\begin{itemize}
\item $\beta:\{1,\ldots,m\}\rightarrow\R_+$, with $\beta(i)\leq\beta(i+1)$, for all $i=1,\ldots,m-1$;
\item $(b,\sigma)\in\R^2$ and $\eta\geq0$;
\item $\pi(\ud z)$ is as in \eqref{eq:alpha_stable}, with $\theta>\eta$, $\alpha\in(1,2)$ and $C=C(\alpha,1)$ as in \eqref{eq:C}.
\end{itemize}

The family of processes $\{Y^i;i=1,\ldots,m\}$ is a simple instance of a {\em flow of CBI processes}, see \cite[Section 3]{DL12}. All components of the flow have a common branching mechanism $\phi$, explicitly given in Lemma \ref{lem:branch_alpha} (with the parameter $\theta$ in \eqref{eq:branch_alpha} being replaced by $\theta/\eta$, due to the appearance of $\eta$ in front of the last integral in \eqref{eq:flow}), while the immigration rate of $Y^i$ is equal to  $\psi^i(z)=\beta(i)z$, for each $i=1,\ldots,m$.
Observe that the processes $\{Y^i;i=1,\ldots,m\}$ share the same volatility coefficients $\sigma$ and $\eta$, the same jump measure $\pi$ and the same speed of mean-reversion $b$. Only the long-run value $\beta(i)/b$ (corresponding to the mean of the ergodic distribution of $Y^i$, see Proposition \ref{prop:ergodic}) is specific for each process $Y^i$, for $i=1,\ldots,m$. This modelling framework is therefore rather parsimonious on the number of model parameters. 
The fact that the martingale terms in \eqref{eq:flow} are generated by common sources of radomness $W$ and $\tildeM$ but depend on the current value of each process $Y^i$ implies a non-trivial dependence structure among the processes $Y^1,\ldots,Y^m$, as it will be shown in formula \eqref{eq:covariation} below.

In this section, we assume the validity of the following condition:
\be	\label{eq:flow_no_explosion}
b \geq \frac{\sigma^2}{2}\frac{\theta}{\eta} + \eta\frac{(1-\alpha)\theta^{\alpha-1}}{\cos(\alpha\pi/2)}.
\ee
Condition \eqref{eq:flow_no_explosion} is equivalent to $\phi(-\theta/\eta)\leq0$. In view of part (i) of Proposition \ref{prop:properties}, since $\theta>\eta$, this condition suffices to ensure that $\EE[e^{Y^i_t}]<+\infty$ for all $i=1,\ldots,m$ and $t\geq0$.
Moreover, by part (ii) of Proposition \ref{prop:properties}, condition \eqref{eq:flow_no_explosion} also ensures non-attainability of $0$.

Defining the factor process $Y=(Y_t)_{t\geq0}$ by $Y_t:=(Y^1_t,\ldots,Y^m_t)^{\top}$, for $t\geq0$, we specify the OIS short rate and spot multiplicative spreads as follows, for all $t\geq0$ and $i=1,\ldots,m$:
\begin{align}
r_t &= \ell(t) + \mu^{\top}Y_t,	\label{eq:flow_OIS}\\
\log S^{\delta_i}(t,t) &= c_i(t) + Y^i_t,	\label{eq:flow_spread}
\end{align}
where $\ell:\R_+\rightarrow\R$, with $\int_0^T|\ell(u)|\ud u<+\infty$ for all $T>0$, $c_i:\R_+\rightarrow\R_+$ and $\mu\in\R^m_+$.

Under specification \eqref{eq:flow_spread}, multiplicative spreads are by construction greater than one.
Moreover, thanks to the properties of a flow of CBI processes, monotonicity of multiplicative spreads can be easily achieved, provided that initially observed spreads are increasing in the tenor. This is the content of the following result, which relies  on a general comparison principle for CBI processes.

\begin{prop}	\label{prop:ordering}
Suppose that $y^i_0\leq y^{i+1}_0$ and $c_i(t)\leq c_{i+1}(t)$, for all $i=1,\ldots,m-1$ and $t\geq0$. 
Then it holds that $S^{\delta_i}(t,T)\leq S^{\delta_{i+1}}(t,T)$ a.s., for all $i=1,\ldots,m-1$ and $0\leq t\leq T<+\infty$.
\end{prop}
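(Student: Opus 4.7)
The plan is to reduce the ordering of forward multiplicative spreads to a pathwise ordering of the driving flow $\{Y^i\}_{i=1}^m$, and then invoke a pathwise comparison principle for flows of CBI processes.

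First, I recall from the discussion following \eqref{eq:fwd_spread} that for every $i=1,\ldots,m$ and $T>0$ the forward multiplicative spread process $(S^{\delta_i}(t,T))_{t\in[0,T]}$ is a martingale under the $T$-forward measure $\QQ^T$. Hence $S^{\delta_i}(t,T) = \EE^T[S^{\delta_i}(T,T)\mid\cF_t]$ for all $0\leq t\leq T$, and by monotonicity of conditional expectation it suffices to establish the pathwise inequality $S^{\delta_i}(T,T) \leq S^{\delta_{i+1}}(T,T)$ almost surely, for each $T\geq0$. By the spot specification \eqref{eq:flow_spread}, combined with the assumption $c_i(T)\leq c_{i+1}(T)$, this in turn reduces to proving
\[
Y^i_T \leq Y^{i+1}_T \qquad \text{a.s., for every } T\geq0.
\]

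Second, to obtain this pathwise ordering of the flow, I observe that both $Y^i$ and $Y^{i+1}$ satisfy the stochastic integral equation \eqref{eq:flow} driven by the \emph{same} white noise $W$ and the \emph{same} compensated Poisson random measure $\tildeM$, with common coefficients $(b,\sigma,\eta,\pi)$. The only differences are the initial values $y^i_0\leq y^{i+1}_0$ and the immigration parameters $\beta(i)\leq\beta(i+1)$, which enter only through the (deterministic, affine) drift $\beta(\cdot)-bY^{\cdot}_{s-}$. In this precise setting, the pathwise construction of \cite[Section 3]{DL12} produces a comparison result for the resulting flow: if initial values and immigration rates are componentwise ordered, then so are the components of the flow, yielding $Y^i_t\leq Y^{i+1}_t$ for all $t\geq0$, almost surely. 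Combining this with the previous step completes the proof.

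The main subtlety lies in the comparison step, because the martingale components in \eqref{eq:flow} are integrals against $W$ and $\tildeM$ taken over state-dependent domains $[0,Y^{\cdot}_{s-}]$ rather than integrals of Lipschitz coefficients against fixed driving noises. This places the argument outside the direct scope of classical Yamada–Watanabe comparison theorems. It is exactly the Dawson–Li pathwise construction---with a single white noise and a single Poisson random measure shared across all components of the flow---that reconciles this: at a putative first crossing time the additional mass in the integration domain of $Y^{i+1}$ provides only non-negative contributions, while the larger drift $\beta(i+1)\geq\beta(i)$ pushes the upper process further up, preventing a crossing. I will simply cite \cite{DL12} for this ingredient rather than reprove it.
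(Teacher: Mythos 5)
Your proof is correct and follows essentially the same route as the paper: reduce the ordering of forward spreads to that of spot spreads via the $\QQ^T$-martingale property of $(S^{\delta_i}(t,T))_{t\in[0,T]}$, then reduce to the pathwise ordering $Y^i\leq Y^{i+1}$ via \eqref{eq:flow_spread}, and finally invoke the Dawson--Li comparison theorem for flows of CBI processes (Theorem 3.2 in \cite{DL12}), which applies precisely because the immigration rates $\beta(i)$ are nondecreasing and the initial values are ordered. The only presentational difference is that the paper argues forward from the comparison result while you argue backward from the claim; both are the same argument.
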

\begin{proof}
Since $\beta:\{1,\ldots,m\}\rightarrow\R_+$ is increasing, \cite[Theorem 3.2]{DL12} implies that, if $y^i_0\leq y^{i+1}_0$, then $\QQ(Y^i_t\leq Y^{i+1}_t,\text{ for all }t\geq0)=1$. Therefore, if in addition $c_i(t)\leq c_{i+1}(t)$ for all $t\geq0$, it follows that $S^{\delta_i}(t,t)\leq S^{\delta_{i+1}}(t,t)$ a.s. for all $t\geq0$. The claim follows by the fact that the process $(S^{\delta_i}(t,T))_{t\in[0,T]}$ defined in \eqref{eq:fwd_spread} is a martingale under the $T$-forward probability measure $\QQ^T$.
\end{proof}

The factors $Y^1,\ldots,Y^m$ possess the characteristic self-exciting behavior of CBI processes. This translates directly into a self-exciting property of spreads: for each $i=1,\ldots,m$, a large value of $S^{\delta_i}(t,t)$ increases the likelihood of further upward jumps of the spread itself. 
As discussed in Remark \ref{rem:jump_cluster}, a large value of $S^{\delta_i}(t,t)$ increases the volatility of the spread process itself, thereby generating volatility clustering zones in correspondence of large values of the spreads.
Under the conditions of Proposition \ref{prop:ordering}, there is a further {\em self-exciting effect among different spreads}: a large value of $S^{\delta_i}(t,t)$ increases the likelihood of upward jumps of all other spreads with tenor $\delta_j$, for $j>i$, reflecting the higher risk implicit in Ibor rates with longer tenors. 
As mentioned in Section \ref{sec:intro} (see in particular Figure \ref{fig:spreads_analysis}), these properties represent empirically relevant features of post-crisis multi-curve interest rate markets.  Figure \ref{fig:model_paths} shows a sample trajectory for a model with $\cD=\{{\rm 3M}, {\rm 6M}\}$, providing a clear evidence of jump clustering phenomena.
The sample paths have been generated by relying on the simulation method described in Appendix \ref{app:simulation}, using the parameter values reported in Table \ref{table:calibratedParameters}.

\begin{figure}[ht]
\centering
\includegraphics[scale=0.5]{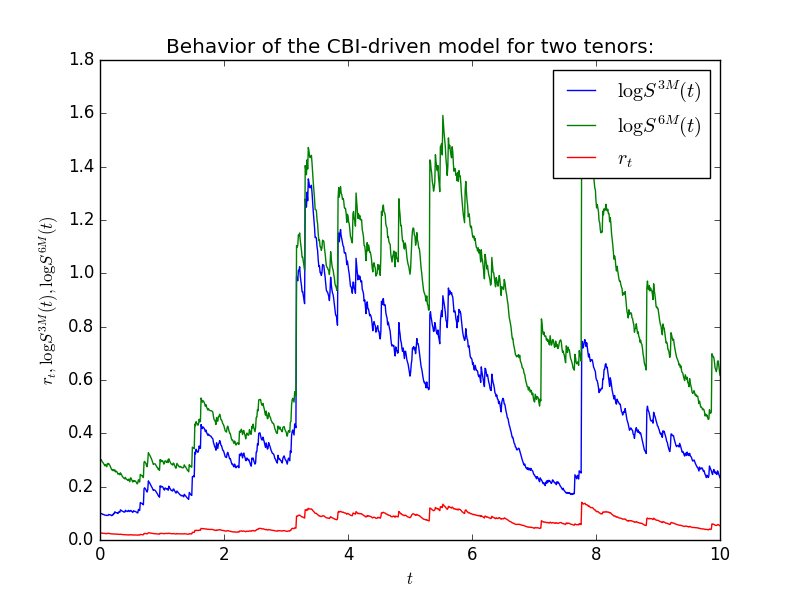}
\caption{One sample path of the short rate (red line) and multiplicative spreads for two tenors (3M in blue and 6M in green).
Parameter set reported in Table~\ref{table:calibratedParameters}.}
\label{fig:model_paths}
\end{figure}

In the model \eqref{eq:flow_OIS}-\eqref{eq:flow_spread}, each factor $Y^i$ drives the multiplicative spread with corresponding tenor $\delta_i$, while all the factors can affect the OIS short rate. This generates a non-trivial dependence between the OIS short rate and multiplicative spreads, as well as among the spreads themselves, in line with the dynamics observed on market data.
In particular, under the conditions of Proposition \ref{prop:ordering}, the quadratic covariation of log-spreads associated to tenors $\delta_i$ and $\delta_j$, with $i<j$, is given by
\be	\label{eq:covariation}
\bigl[\log S^{\delta_i}(\cdot,\cdot), \log S^{\delta_j}(\cdot,\cdot)\bigr]
= \sigma^2 \int_0^{\cdot}Y^i_s\ud s + \eta^2\int_0^{\cdot}\int_0^{+\infty}\int_0^{Y^i_{s-}}z^2M(\ud s,\ud z,\ud u).
\ee
This representation of the quadratic covariation between log-spreads shows that common jumps arise due to the presence of the common random measure $M(\ud s,\ud z,\ud u)$. The feature of common upward jumps is consistent with the empirical evidence reported in the introduction and is clearly visible in the simulated paths of Figure \ref{fig:model_paths}.

The components of the flow of CBI processes $\{Y^i;1,\ldots,m\}$ are highly dependent. Therefore, model \eqref{eq:flow_OIS}-\eqref{eq:flow_spread}  in its present form does not belong to the class of CBI-driven multi-curve models as introduced in Definition \ref{def:general_model}. However, an easy transformation allows embedding the present specification into the framework of Section \ref{sec:framework}.
To this effect, we define the $m$-dimensional process $X=(X_t)_{t\geq0}$ by
\be	\label{eq:process_X}
X^i_t := Y^i_t - Y^{i-1}_t,
\qquad\text{ for all }t\geq0\text{ and }i=1,\ldots,m,
\ee
with $Y^0\equiv 0$, and $X_t:=(X^1_t,\ldots,X^m_t)^{\top}$. 
We are now in a position to state the following result.

\begin{prop}	\label{prop:flow_CBI}
Suppose that $y^i_0\leq y^{i+1}_0$, for all $i=1,\ldots,m-1$. Let us define $\lambda\in\R^m_+$ and $\boldsymbol{\gamma}=(\gamma_1,\ldots,\gamma_m)\in\R^{m\times m}_+$ by
\be	\label{eq:def_lambda_gamma}
\lambda_j := \sum_{k=j}^m\mu_k
\qquad\text{and}\qquad
\gamma_{i,j} := \ind_{\{j\leq i\}},
\qquad\text{ for all }i,j=1,\ldots,m.
\ee
Then the tuple $(X,\ell,\lambda,\mathbf{c},\boldsymbol{\gamma})$ generates a CBI-driven multi-curve model such that
\begin{enumerate}
\item[(i)] for each $i=1,\ldots,m$, the process $X^i=(X^i_t)_{t\geq0}$ is a tempered $\alpha$-stable CBI process with branching mechanism $\phi$ and immigration rate $\psi^i(z)=(\beta(i)-\beta(i-1))z$, where $\beta(0):=0$;
\item[(ii)] the processes $(X^1,\ldots,X^m)$ are mutually independent;
\item[(iii)] the OIS short rate and multiplicative spreads are given by \eqref{eq:flow_OIS}-\eqref{eq:flow_spread}.
\end{enumerate}
\end{prop}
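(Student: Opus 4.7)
The plan is to reduce the model \eqref{eq:flow_OIS}-\eqref{eq:flow_spread} driven by the flow $\{Y^i\}$ to the framework of Definition \ref{def:general_model} via the transformation \eqref{eq:process_X}, by verifying (iii) first, then (i)-(ii), and finally the technical lifetime condition in \eqref{eq:general_model_life}.

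First, I would establish (iii) together with the affine form of $\rois_t$ and $\log S^{\delta_i}(t,t)$. The hypothesis $y^i_0\leq y^{i+1}_0$ combined with the monotonicity $\beta(i)\leq\beta(i+1)$ and the comparison principle invoked in the proof of Proposition \ref{prop:ordering} (i.e., \cite[Theorem 3.2]{DL12}) yields $0\leq Y^1_t\leq\cdots\leq Y^m_t$ almost surely, so that $X^i_t\geq 0$ and $Y^i_t=\sum_{j=1}^i X^j_t$. Substituting into \eqref{eq:flow_OIS}-\eqref{eq:flow_spread} and swapping the order of summation gives
\[
\mu^{\top}Y_t = \sum_{i=1}^m\mu_i\sum_{j=1}^iX^j_t = \sum_{j=1}^m\Bigl(\sum_{k=j}^m\mu_k\Bigr)X^j_t = \lambda^{\top}X_t,\qquad Y^i_t = \sum_{j=1}^i X^j_t = \gamma_i^{\top}X_t,
\]
which matches \eqref{eq:general_rate}-\eqref{eq:general_spread} with $\lambda,\boldsymbol{\gamma}$ as in \eqref{eq:def_lambda_gamma}.

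Second, for (i)-(ii), I would derive the SDE for $X^i$ by differencing \eqref{eq:flow} at indices $i$ and $i-1$ (with $Y^0\equiv 0$, $\beta(0):=0$):
\[
X^i_t = (y^i_0-y^{i-1}_0) + \int_0^t\bigl((\beta(i)-\beta(i-1))-bX^i_{s-}\bigr)\ud s + \sigma\int_0^t\!\!\int_{Y^{i-1}_{s-}}^{Y^i_{s-}}\!\!W(\ud s,\ud u) + \eta\int_0^t\!\!\int_0^{+\infty}\!\!\int_{Y^{i-1}_{s-}}^{Y^i_{s-}}\!\!z\,\tildeM(\ud s,\ud z,\ud u).
\]
The key step is to recognize the strip-wise integrals against $W$ and $\tildeM$ as stochastic integrals against independent white noises and Poisson random measures supported on $(0,X^i_{s-}]$ with intensities $\ud s\,\ud u$ and $\ud s\,\pi(\ud z)\,\ud u$: the pairwise disjointness of the (random, $\FF$-adapted) strips $(Y^{i-1}_{s-},Y^i_{s-}]$ together with the translation-invariance of $W,M$ in the $u$-coordinate gives the required independence. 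This is exactly the flow-of-CBI decomposition of \cite[Theorems 3.1--3.2]{DL12}. Proposition \ref{prop:CBI} then identifies each $X^i$ as a tempered $\alpha$-stable CBI process with the common branching mechanism $\phi$ (inherited from $Y^i$, with $\eta$ absorbed into the effective tempered $\alpha$-stable structure with constant $C(\alpha,\eta)$ and tempering $\theta/\eta$) and immigration rate $\psi^i(z)=(\beta(i)-\beta(i-1))z$, and the strip disjointness gives the mutual independence of $X^1,\ldots,X^m$.

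Third, I would verify \eqref{eq:general_model_life}. When $j>i$, $\gamma_{i,j}=0$ and the bound $e^{-\lambda_j\int_0^tX^j_s\ud s}\leq 1$ gives $T^{(0,\lambda_j)}=+\infty$ trivially. When $j\leq i$, $\gamma_{i,j}=1$; note that $\cY^j=[-\theta/\eta,+\infty)$ and $\theta>\eta$, so $-1\in\cY^j$. By Theorem \ref{thm:lifetime} it suffices to show that $\phi(-1)\leq\lambda_j$, i.e., $-1\geq p_{\lambda_j}$. Convexity of $\phi$ together with $\phi(0)=0$ and $\phi(-\theta/\eta)\leq 0$ (which is \eqref{eq:flow_no_explosion}) yields $\phi(y)\leq 0$ for every $y\in[-\theta/\eta,0]$; in particular $\phi(-1)\leq 0\leq\lambda_j$, which closes the argument.

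The main obstacle will be step two: rigorously recasting the integrals against $W$ and $\tildeM$ over the random, $\FF$-adapted strips $(Y^{i-1}_{s-},Y^i_{s-}]$ as independent driving noises on $(0,X^i_{s-}]$. Rather than reprove this, I would invoke \cite[Theorems 3.1--3.2]{DL12}, where this strip decomposition underlying the flow-of-CBI construction is established in full generality.
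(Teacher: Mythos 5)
Your proposal is correct and follows essentially the same route as the paper: reduce the flow model to Definition~\ref{def:general_model} via the differencing \eqref{eq:process_X}, cite \cite{DL12} for the independence and CBI structure of $(X^1,\ldots,X^m)$, and observe the algebraic identities $\mu^{\top}Y_t=\lambda^{\top}X_t$ and $Y^i_t=\gamma_i^{\top}X_t$. The only (minor) variation is in the verification of the lifetime condition \eqref{eq:general_model_life}: the paper simply invokes part (i) of Proposition~\ref{prop:properties} (which already encodes $\phi(-\theta/\eta)\leq0\Rightarrow T^{(-1,0)}=+\infty$, and then $T^{(-1,q)}\geq T^{(-1,0)}$ for $q\geq0$ by monotonicity of discounted exponential moments), whereas you unfold this one level further by appealing directly to the characterization $p\geq p_q\Rightarrow T^{(p,q)}=+\infty$ of Theorem~\ref{thm:lifetime} together with $\phi(0)=0$, convexity, and $\phi(-\theta/\eta)\leq 0$ — which is exactly the argument underlying Proposition~\ref{prop:properties}(i). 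Both are correct and amount to the same underlying computation.
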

\begin{proof}
Parts (i) and (ii) are a direct consequence of \cite[Theorems 3.2 and 3.3]{DL12}. To prove part (iii), it suffices to observe that, due to the definition of $\lambda$ and $\boldsymbol{\gamma}$ in \eqref{eq:def_lambda_gamma}, it holds that
\[
\mu^{\top}Y_t = \lambda^{\top}X_t
\qquad\text{and}\qquad
Y^i_t = \gamma_i^{\top}X_t,
\qquad\text{ for all }i=1,\ldots,m\text{ and }t\geq0.
\]
Note that condition \eqref{eq:general_model_life} is implied by condition \eqref{eq:flow_no_explosion}, since $\cY^i=[-\theta/\eta,+\infty)$ (see Lemma \ref{lem:branch_alpha}), for all $i=1,\ldots,m$, where $\theta>\eta$, and due to part (i) of Proposition \ref{prop:properties}.
\end{proof}

In view of the above proposition, the model \eqref{eq:flow_OIS}-\eqref{eq:flow_spread} driven by the CBI flow $\{Y^i;i=1,\ldots,m\}$ can be equivalently represented by a family of independent risk factors $(X^1,\ldots,X^m)$, where each factor $X^i$ is affecting all spreads with tenor $\delta_j\geq\delta_i$ (and, possibly, the OIS  rate). 
In particular, the presence of common risk factors among different spreads accounts for the possibility of common upward jumps, as mentioned above, in line with the stylized facts reported in Section \ref{sec:intro}.

OIS zero-coupon bond prices and forward multiplicative spreads  can be explicitly computed by relying on Proposition \ref{prop:bond_spreads}. 
More specifically, it holds that, for all $i=1,\ldots,m$,
\be	\label{eq:AB}	\begin{aligned}
\cA_0(t,T) &= -\int_0^{T-t}\ell(t+s)\ud s - \sum_{j=1}^m\bigl(\beta(j)-\beta(j-1)\bigr)\int_0^{T-t}v\biggl(s,0,\sum_{k=j}^m\mu_k\biggr)\ud s,	\\
\cB^j_0(T-t) &= -v\biggl(T-t,0,\sum_{k=j}^m\mu_k\biggr),
\qquad\text{ for all }j=1,\ldots,m\\
\cA_i(t,T) &= c_i(T) + \sum_{j=1}^i\left(\bigl(\beta(j)-\beta(j-1)\bigr)\int_0^{T-t}\Biggl(v\biggl(s,0,\sum_{k=j}^m\mu_k\biggr)-v\biggl(s,-1,\sum_{k=j}^m\mu_k\biggr)\Biggr)\ud s\right),\\
\cB^j_i(T-t) &= \Biggl(v\biggl(T-t,0,\sum_{k=j}^m\mu_k\biggr) - v\biggl(T-t,-1,\sum_{k=j}^m\mu_k\biggr)\Biggr)\ind_{\{j\leq i\}},
\qquad\text{ for all }j=1,\ldots,m.
\end{aligned}	\ee
Observe that, unlike in the general framework of Section \ref{sec:framework}, the function $v$ appearing in the above formulae is the same for all $i,j=1,\ldots,m$, due to the fact that the components of a flow of CBI processes share a common branching mechanism $\phi$. 
This results in additional analytical tractability of the  present modelling framework.
We recall that $v$ is given by the unique solution to the ODE \eqref{eq:ODE_v} with $\phi$ given as in \eqref{eq:branch_alpha}, with the parameter $\theta$ replaced by $\theta/\eta$.

\subsection{Valuation of caplets via Fourier methods}	\label{sec:caplet_pricing}

In this section, we provide a semi-closed formula for the price of a caplet.
Let us consider a caplet written on the Ibor rate with tenor $\delta_i$, strike $K>0$, maturity $T>0$ and settled in arrears at time $T+\delta_i$. For simplicity of presentation, we consider a unitary notional amount. 
By formula (A.1) in \cite{CFGaffine}, the arbitrage-free price of the caplet can be expressed as
\begin{align}	\Pi^{{\rm CPLT}}(t;T,\delta_i,K,1) 
&= \delta_i\,\EE\left[e^{-\int_t^{T+\delta_i}r_s\ud s}\bigl(L(T,T,\delta_i)-K\bigr)^+\bigr|\cF_t\right]	\notag\\
&= B(t,T+\delta_i)\EE^{T+\delta_i}\left[\bigl(e^{\cX_T^i}-(1+\delta_i K)\bigr)^+\Bigr|\cF_t\right],
%&= \EE\left[e^{-\int_t^Tr_s\ud s}B(T,T+\delta_i)\bigl(e^{\cX_T^{\delta_i}}-(1+\delta_i K)\bigr)^+\Bigr|\cF_t\right],
\qquad\text{ for }t\leq T,
\label{eq:gen_caplet}
\end{align}
where 
%$\EE^{T+\delta_i}[\cdot]$ denotes the expectation under the $(T+\delta_i)$-forward measure $\QQ^{T+\delta_i}$ and 
the process $\cX^i=(\cX_t^i)_{t\geq0}$ is defined by $\cX_t^i:=\log(S^{\delta_i}(t,t)/B(t,t+\delta_i))$, for all $t\geq0$.
As a consequence of Proposition \ref{prop:bond_spreads}, \eqref{eq:flow_spread} and Proposition \ref{prop:flow_CBI}, the process $\cX^i$ admits the explicit representation
\[
\cX^i_t = c_i(t) - \cA_0(t,t+\delta_i) + \bigl(\gamma_i-\cB_0(\delta_i)\bigr)^{\top}X_t,
\qquad\text{ for all }t\geq0,
\]
where the $d$-dimensional process $(X_t)_{t\geq0}$ is defined in \eqref{eq:process_X} and the functions $\cA_0$ and $\cB_0$ in \eqref{eq:AB}.
For $T\geq0$ and $i=1,\ldots,m$, let us introduce the set
\[
\Theta_i(T) := \bigl\{u\in\R : \EE^{T+\delta_i}[e^{u\cX_T^i}]<+\infty\bigr\}^{\circ}
\]
and the strip $\Lambda_i(T):=\{\zeta\in\C : -\Im(\zeta)\in\Theta_i(T)\}$. Using condition \eqref{eq:flow_no_explosion} and the fact that $-\cB^j_0(\delta_i)=v(\delta_i,0,\lambda_j)\geq0$, for all $i,j=1,\ldots,m$, together with the increasingness of  the map $\R_+\ni q\mapsto v(\delta_i,0,q)$, it can be checked that  the condition
\be	\label{eq:int_fourier}
u< \frac{\theta/\eta+v(\delta_i,0,\lambda_1)}{1+v(\delta_i,0,\lambda_1)}
\ee
is sufficient to ensure that $u\in\Theta_i(T) $, for all $T>0$.
In particular, it always holds that $(-\infty,+1]\subseteq\Theta_i(T)$.
For $\zeta\in\Lambda_i(T)$, the {\em modified characteristic function} of $\cX^i_T$ can be defined and explicitly computed as follows:
\be	\label{eq:mod_char_fun}	\begin{aligned}
\Phi^i_{t,T}(\zeta)
&:= B(t,T+\delta_i)\,\EE^{T+\delta_i}\bigl[e^{\im\zeta\cX_T^i}|\cF_t\bigr]
= \EE\left[e^{-\int_t^Tr_s\ud s}B(T,T+\delta_i)e^{\im\zeta\cX_T^i}\Bigr|\cF_t\right]	\\
&=  \exp\left( - \int_0^{T-t}\ell(t+s)\ud s
+ (1-\im\zeta)\cA_0(T,T+\delta_i)+\im\zeta c_i(T)\right.\\
&\qquad\quad\;\;\left.	
-\sum_{j=1}^m\bigl(\beta(j)-\beta(j-1)\bigr)\int_0^{T-t}v\bigl(s,(\im\zeta-1)\cB_0^j(\delta_i)-\im\zeta\gamma_{i,j},\lambda_j\bigr)\ud s\right.\\
&\qquad\quad\;\;\left.
- \sum_{j=1}^mv\bigl(T-t,(\im\zeta-1)\cB_0^j(\delta_i)-\im\zeta\gamma_{i,j},\lambda_j\bigr)X^j_t\right).
\end{aligned}	\ee
We remark that, under condition \eqref{eq:flow_no_explosion}, the above application of the affine transform formula \eqref{eq:aff_transform} in the complex domain is justified by Remark \ref{rem:complex_alpha}. 
More specifically, $\zeta\in\Lambda_i(T)$ ensures that $\Re((\im\zeta-1)\cB^j_0(\delta_i)-\im\zeta\gamma_{i,j})>-\theta/\eta$. Therefore, by condition \eqref{eq:flow_no_explosion} together with Remark \ref{rem:complex_alpha}, it follows that $v(t,\Re((\im\zeta-1)\cB^j_0(\delta_i)-\im\zeta\gamma_{i,j}),\lambda_j)>-\theta/\eta$, for all $t\geq0$, meaning that the solution to the ODE \eqref{eq:ODE_v} with $p=\Re((\im\zeta-1)\cB^j_0(\delta_i)-\im\zeta\gamma_{i,j})$ stays in $\cY^{\circ}$ (compare with Remark \ref{rem:complex}-(2)).

We are now in a position to state the caplet valuation formula, which is a direct consequence of \cite[Theorem 5.1]{lee2004}. Note that, according to the notation of \cite{lee2004}, in our case we have that $G=G_1$ and $b_0=b_1=1\in\Theta_i(T)$ (due to the fact that $\theta>\eta$ and condition \eqref{eq:flow_no_explosion} holds).

\begin{prop}	\label{prop:caplet}
Let $\bar{K}_i:=1+\delta_i K$ and $\epsilon\in\R$ such that $1+\epsilon\in\Theta_i(T)$. The arbitrage-free price at time $t\leq T$ of a caplet written on the Ibor rate with tenor $\delta_i$, strike $K>0$, maturity $T>0$ and settled in arrears at time $T+\delta_i$, is given by
\[
\Pi^{{\rm CPLT}}(t;T,\delta_i,K,1)
= R^i_{t,T}\left(\bar{K}_i,\epsilon\right)+\frac{1}{\pi }\int_{0-\im\epsilon}^{\infty-\im\epsilon}\Re\left(e^{-\im\zeta \log(\bar{K}_i)}\frac{\Phi^i_{t,T}(\zeta-\im)}{-\zeta(\zeta-\im)}\right)\ud\zeta,  
\]
where $\Phi^i_{t,T}$ is given in \eqref{eq:mod_char_fun} and $R^i_{t,T}(\bar{K}_i,\epsilon)$ is given by
\[
R^i_{t,T}\left(\bar{K}_i,\epsilon\right) = 
\begin{cases}
 \Phi^i_{t,T}(-\im)-\bar{K}_i\Phi^i_{t,T}(0), & \mbox{if } \epsilon<-1, \\ 
 \Phi^i_{t,T}(-\im)-\frac{\bar{K}_i}{2}\Phi^i_{t,T}(0), & \mbox{if } \epsilon=-1, \\
   \Phi^i_{t,T}(-\im), & \mbox{if } -1<\epsilon<0,\\
    \frac{1}{2}\Phi^i_{t,T}(-\im),&\mbox{if } \epsilon = 0,\\
    0,&\mbox{if } \epsilon >0.
 \end{cases}
\]
\end{prop}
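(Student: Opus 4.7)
The plan is to reduce the caplet pricing problem to a direct application of the Fourier inversion formula of \cite[Theorem 5.1]{lee2004}, specialized to the European call payoff $(e^{\cX^i_T}-\bar{K}_i)^+$ under the $(T+\delta_i)$-forward measure $\QQ^{T+\delta_i}$, as highlighted by equation \eqref{eq:gen_caplet}. First, I would observe that Lee's framework requires a generalized characteristic function of the form $\Phi^i_{t,T}(\zeta) = B(t,T+\delta_i)\EE^{T+\delta_i}[e^{\im\zeta\cX^i_T}|\cF_t]$, defined on a horizontal strip $\Lambda_i(T)$ in the complex plane whose width is controlled by the set $\Theta_i(T)$ of admissible Laplace exponents.

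Next, I would verify the explicit expression displayed in \eqref{eq:mod_char_fun}. This uses: the change-of-numeraire identity $B(t,T+\delta_i)\EE^{T+\delta_i}[\,\cdot\,|\cF_t] = \EE[e^{-\int_t^T r_s\ud s}B(T,T+\delta_i)\,\cdot\,|\cF_t]$; the explicit bond formula \eqref{eq:general_bond} at time $T$; the representation of $\cX^i_T$ in terms of $X_T$ derived just above \eqref{eq:mod_char_fun}; the closed form \eqref{eq:AB}; and finally the complex-domain extension of the affine transform formula from Remark~\ref{rem:complex}-(2), applied to the bi-dimensional process $(X^j,\int_0^\cdot X^j_s\,\ud s)$ componentwise. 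For this extension to be legitimate for $\zeta\in\Lambda_i(T)$, I would check that, under condition \eqref{eq:flow_no_explosion}, the solution $v(\cdot,\Re((\im\zeta-1)\cB^j_0(\delta_i)-\im\zeta\gamma_{i,j}),\lambda_j)$ of the ODE \eqref{eq:ODE_v} stays strictly above $-\theta/\eta$, which follows from $\cY^{\circ}=(-\theta/\eta,+\infty)$ together with Theorem \ref{thm:lifetime} applied to the tempered $\alpha$-stable branching mechanism satisfying $\phi(-\theta/\eta)\le 0$.

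Then I would invoke Lee's theorem directly. The call price admits a contour integral representation of the form $\frac{1}{\pi}\int_{0-\im\epsilon}^{\infty-\im\epsilon}\Re\bigl(e^{-\im\zeta\log\bar{K}_i}\Phi^i_{t,T}(\zeta-\im)/(-\zeta(\zeta-\im))\bigr)\ud\zeta$, which converges provided the contour parameter $-\epsilon$ lies in $\Theta_i(T)$ shifted by $1$. Shifting the contour from a generic admissible line to the standard one (for instance, along the real axis) picks up residues at the two poles $\zeta = 0$ and $\zeta = -\im$ of the denominator $-\zeta(\zeta-\im)$, which by elementary residue computation equal respectively $-\bar{K}_i\Phi^i_{t,T}(0) = -\bar{K}_i B(t,T+\delta_i)$ and $\Phi^i_{t,T}(-\im) = B(t,T+\delta_i)\EE^{T+\delta_i}[e^{\cX^i_T}|\cF_t]$. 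The case distinction in the formula for $R^i_{t,T}(\bar{K}_i,\epsilon)$ reflects whether the contour lies above, below, or precisely at the imaginary parts $0$ and $-1$ of these two poles, with the boundary cases giving a factor $1/2$ by the standard principal-value convention.

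The main obstacle I anticipate is not the residue accounting, which is mechanical once Lee's theorem is invoked, but rather justifying that the conditions of \cite[Theorem 5.1]{lee2004} are satisfied, namely that $1\in\Theta_i(T)$ so that the residue at $\zeta=-\im$ is finite. This reduces to the sufficient condition \eqref{eq:int_fourier} evaluated at $u=1$, which holds because $\theta>\eta$ implies $\theta/\eta>1$, and because $v(\delta_i,0,\lambda_1)\ge 0$ makes the right-hand side of \eqref{eq:int_fourier} strictly greater than $1$. Once this admissibility is established, the stated formula follows by direct identification with \cite[Theorem 5.1]{lee2004} in the notation $G=G_1$, $b_0=b_1=1$.
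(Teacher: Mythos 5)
Your overall strategy is the same one the paper itself uses: the statement is declared a direct consequence of Lee's option pricing theorem, after (a) deriving the modified characteristic function $\Phi^i_{t,T}$ from the affine transform formula in the complex domain and (b) checking that $1\in\Theta_i(T)$ so that the constants $b_0=b_1=1$ are admissible in Lee's notation. You identify both ingredients correctly, and your justification of $1\in\Theta_i(T)$ via $\theta/\eta>1$ and $v(\delta_i,0,\lambda_1)\geq 0$ in \eqref{eq:int_fourier} matches what the paper records (that $(-\infty,1]\subseteq\Theta_i(T)$ always holds under \eqref{eq:flow_no_explosion}). The complex-domain extension argument via Remark~\ref{rem:complex_alpha} and the condition $\phi(-\theta/\eta)\leq0$ is also the paper's argument.

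One slip in your informal residue summary: the zeros of $-\zeta(\zeta-\im)$ are at $\zeta=0$ and $\zeta=\im$ (imaginary parts $0$ and $+1$), not at $\zeta=0$ and $\zeta=-\im$. The pole at $\zeta=0$ carries the term $\Phi^i_{t,T}(-\im)$ (since the characteristic function is evaluated at $\zeta-\im$), while the pole at $\zeta=\im$ carries the term $-\bar K_i\Phi^i_{t,T}(0)$; you have these two assignments swapped, and the contour level at which the boundary case $\epsilon=-1$ occurs is $\Im(\zeta)=+1$, not $-1$. Since the actual proof in the paper consists of citing Lee's theorem rather than redoing the contour integration, this does not undermine your approach, but if you were to flesh out the ``mechanical'' residue accounting it would need to be corrected.
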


\subsection{Valuation of caplets via quantization}	\label{sec:caplet_pricing_quant}

The analytical tractability of CBI processes allows for the development of a quantization-based pricing methodology, which is here proposed for the first time in an interest rate model. 
In this section, we show that the Fourier-based quantization technique recently introduced in \cite{CFG18} can be easily applied for the pricing of caplets.

The key ingredient of this approach is represented by the {\em quantization grid} $\Gamma^N=\{x_1,\ldots,x_N\}$, with $x_1<\ldots<x_N$, for some chosen $N\in\N$ (see \cite{gl2000,g19} for details).
Once the quantization grid $\Gamma^N$ has been determined, the random variable $e^{\cX_T^i}$ appearing in the caplet valuation formula \eqref{eq:gen_caplet} is approximated by its {\em Voronoi $\Gamma^N$-quantization}, i.e., the nearest neighbour projection $\widehat{e^{\cX_T^i}}$ of $e^{\cX_T^i}$ onto $\Gamma^N$, given by the discrete random variable
\[
\widehat{e^{\cX_T^i}} = \sum_{j=1}^{N} x_{j} \ind_{\{x_{j}^{-} \leq e^{\cX_T^i} \leq x_{j}^{+}\}},
\]
where $x_{j}^{-} = (x_{j-1}+x_{j})/2$ and $x_{j}^{+} = (x_{j+1}+x_{j})/2$,  for $j=1,\ldots,N$, with $x_{1}^{-} = 0$ and $x_{N}^{+} = +\infty$. 
Formula \eqref{eq:gen_caplet} can then be approximated as follows (considering $t=0$ for simplicity of presentation):
\[
\Pi^{{\rm CPLT}}(0;T,\delta_i,K,1) 
\approx B(0,T+\delta_i)\sum_{j=1}^{N} \bigl(x_{j}-(1+K\delta_{i})\bigr)^{+}\mathbb{Q}^{T+\delta_i}\bigl(\widehat{e^{\cX_T^i}} = x_{j}\bigr),
\]
where the {\em companion weights} $\mathbb{Q}^{T+\delta_i}(\widehat{e^{\cX_T^i}} = x_{j})$, for $j=1,\ldots,N$, are computed by
\be \label{eq:companion_weights}
\mathbb{Q}^{T+\delta_i}\bigl(\widehat{e^{\cX_T^i}} = x_{j}\bigr) 
= \mathbb{Q}^{T+\delta_i}\bigl(e^{\cX_T^i} \leq x_{j}^{+}\bigr) - \mathbb{Q}^{T+\delta_i}\bigl(e^{\cX_T^i} \leq x_{j}^{-}\bigr).  
\ee

The core of quantization consists in optimally determining the quantization grid $\Gamma^N$ in such a way that the discrete distribution of $\widehat{e^{\cX_T^i}}$ over $\Gamma^{N}$ is a good approximation of the continuous distribution of $e^{\cX_T^i}$. This is achieved by choosing a grid $\Gamma$ that minimizes the following $L^p$-distance:
\be \label{eq:distortion_function}
D_{p}(\Gamma) = D_p(\{x_1,\ldots,x_N\}) 
:= \bigl\|e^{\cX_T^i}-\widehat{e^{\cX_T^i}}\bigr\|_{L^p(\mathbb{Q}^{T+\delta_i})} \\
= \mathbb{E}^{T+\delta_i}\left[\min_{j=1,\ldots,N} \bigl|e^{\cX_T^i}-x_{j}\bigr|^{p}\right]^{1/p}.
\ee
In the present setting, it can be shown that this minimization problem admits a unique solution of full size $N$ (see \cite[Proposition 1.1]{g19}). In practice, $\Gamma^N$ is typically determined by searching the critical points of the map $\Gamma\mapsto D_p(\Gamma)$ ({\em sub-optimal} quantization grid). In view of \cite[Theorem 1]{CFG18}, a sub-optimal quantization grid $\Gamma^N=\{x_1,\ldots,x_N\}$ is given by the solution to the following equation:\begin{equation}	\label{eq:master_equation} 
\int_{0}^{+\infty}{ \Re\left[\Psi_{T}^{i}(u)e^{-\im u\log(x_j)}\left(\bar{\beta}\left(\frac{x_j^-}{x_j},-\im u,p\right)-\bar{\beta}\left(\frac{x_j}{x_j^+},1-p+\im u,p\right)\right)\right] \ud u} = 0, \end{equation}
for $j=1,\ldots,N$, where $\bar{\beta}$ is defined as
\begin{equation*}
\bar{\beta}(x,a,b) = \int_{x}^{1} {t^{a-1}(1-t)^{b-1}\ud t},
\qquad\text{ for $a \in \mathbb{C}$, $\Re(b) > 0$ and $x \in (0,1)$},
\end{equation*}
and $\Psi_{T}^{i}$ stands for the $(T+\delta_i)$-forward characteristic function of  $\cX_T^i$:
\[
\Psi_{T}^{i}(u) := \mathbb{E}^{T+\delta_i}[e^{\im u \cX_T^i}] = \frac{\Phi^i_{0,T}(u)}{B(0,T+\delta_i)}.
\]

Equation \eqref{eq:master_equation} can be efficiently solved by relying on Newton-Raphson-type algorithms. Indeed, in the present framework, the gradient $\nabla D_p$ of the function $D_p$ can be analytically computed and the associated Hessian matrix $H[D_p]$ turns out to be tridiagonal. 
To initialize the algorithm, the starting grid $\Gamma^N_{(0)}$ can be constructed by using a regular spacing around the expectation of the state variable $e^{\cX_T^i}$, which is directly determined by market observables:
\[
\mathbb{E}^{T+\delta_i}\bigl[e^{\cX_T^i}\bigr] 
= \mathbb{E}^{T+\delta_i}\left[\frac{S^{\delta_{i}}(T,T)}{B(T,T+\delta_i)}\right] 
= 1 + \delta_{i}L(0,T,\delta_i).
\]
Starting from $\Gamma^N_{(0)}$, a basic formulation of the Newton-Raphson algorithm for the determination of a sub-optimal quantization grid $\Gamma^N$ is then based on the following iterations:
\[
\Gamma_{(n+1)}^{N} = \Gamma_{(n)}^{N} - \left(H[D_{p}](\Gamma_{(n)}^{N})\right)^{-1}\nabla D_{p}(\Gamma_{(n)}^{N}), \quad \text{at each iteration $n\in\N$.}
\]

\begin{rem}
We emphasize that the companion weights $\mathbb{Q}^{T+\delta_i}(\widehat{e^{\cX_T^i}}=x_j)$, for $j=1,\ldots,N$, and the density function of the random variable $e^{\cX_T^i}$ needed for the computation of the function $D_p(\Gamma)$ in \eqref{eq:distortion_function} can be recovered from the $(T+\delta_i)$-forward characteristic function $\Psi^i_T$. Indeed, it holds that
\begin{eqnarray*}
	\mathbb{Q}^{T+\delta_i}\bigl(e^{\cX_T^i} \in dx\bigr) &=& \left(\frac{1}{x\pi}\int_{0}^{+\infty}{\Re\left(e^{-\im u \log(x)}\Psi_{T}^{i}(u)\right)du}\right)\ud x, \\
	\mathbb{Q}^{T+\delta_i}\bigl(e^{\cX_T^i} \leq x\bigr) &=& \frac{1}{2} - \frac{1}{\pi}\int_{0}^{+\infty}{\Re\left(\frac{e^{-\im u \log(x)}\Psi_{T}^{i}(u)}{\im u}\right)\ud u}.
\end{eqnarray*}
As shown in \eqref{eq:mod_char_fun}, $\Psi^i_T$ can be analytically computed for a CBI-driven multi-curve model.
\end{rem}

\section{Numerical Results and Model Calibration}\label{sec:calibration}

In this section, we present some numerical results. In particular, we calibrate the model introduced in Section \ref{sec:flow} to market data relative to the 3M and 6M tenors.

\subsection{Numerical comparison of pricing methods}

We implement the Fourier and the quanti\-zation-based pricing methodologies developed in Sections \ref{sec:caplet_pricing}-\ref{sec:caplet_pricing_quant}\footnote{The Java language has been used for the whole calibration procedure. The source code is available on the website \url{https://github.com/AlessandroGnoatto/CBIMultiCurve}.}. 
In order to assess the reliability of both approaches, we compare them under different combinations of moneyness, maturities and model parameters.
We preliminarily validate the FFT methodology by means of Monte Carlo simulations and, by relying on the simulation method described in Appendix \ref{app:simulation}, we verify that caplet prices computed by FFT correspond to Monte Carlo prices. This validation procedure enables us to take FFT prices as benchmark in the following.
We then compare the FFT and quantization pricing methods. Table~\ref{table:priceComparison} shows the results of this comparison, reporting the percentage difference between FFT and quantization prices for caplets with strikes $1\%$ and $2\%$ and maturities ranging from $1$ up to $2$ years. 
This comparison over different strikes and maturities enables us to
evaluate the reliability of the quantization approach against the  FFT methodology. In Table~\ref{table:priceComparison}, we use an FFT with 4096 points and a quantization grid of 10 points. The two methodologies have different computation times. For the parameter set considered in Table~\ref{table:calibratedParameters}, FFT prices are obtained in 2 seconds for each maturity.
On the other hand, the computation time for quantization is initially lower (0.5 seconds) but then, as the maturity increases,  quantization becomes computationally more expensive, with an average computation time of 3 seconds for larger maturities.

As a further example, we report in Table~\ref{table:priceComparison2} a comparison between the two pricing methodologies for a different parameter set, corresponding to increased volatility. More specifically, we increase the parameters $\sigma$ and $\eta$ by $50\%$ with respect to the calibrated values reported in Table~\ref{table:calibratedParameters} and we set $\alpha=1.8$. The first run of the comparison was not totally satisfactory since we observed that the prices produced by quantization (using a grid with 10 points) were diverging from those obtained via FFT. This issue has been solved by increasing the number of points for the quantization from $10$ to $20$, leading to an accuracy comparable to Table~\ref{table:priceComparison}. This analysis shows that some care is needed when applying the quantization pricing methodology:  for some parameter set one needs to re-adjust the meta-parameters of the numerical scheme, a task which is difficult to perform during a calibration procedure. In summary, we deem the FFT approach more robust as well as more computationally efficient in view of the calibration of the model.

\begin{table}[ht]
\begin{tabular}{|c|c|c|c|c|c|c|}
\hline
& FFT $2\%$ & Quant. $2\%$ &Difference $2\%$ & FFT $1\%$ & Quant. $1\%$ &  Difference $1\%$ \\
\hline
$1$&$0.0064146$ & $0.0063414$ & $1.1554\%$ & $0.0065360$ & $0.0070082$ & $-6.7368\%$ \\
$1.1$&$0.0064303$ & $0.0063641$ & $1.0404\%$ & $0.0065520$ & $0.0070325$ & $-6.8330\%$ \\
$1.2$&$0.0064460$ & $0.0063868$ & $0.92556\%$ & $0.0065679$ & $0.0070569$ & $-6.9292\%$ \\
$1.3$&$0.0064679$ & $0.0064187$ & $0.76738\%$ & $0.0065903$ & $0.0070910$ & $-7.0616\%$ \\
$1.4$&$0.0064983$ & $0.0064626$ & $0.55350\%$ & $0.0066212$ & $0.0071381$ & $-7.2408\%$ \\
$1.5$&$0.0065291$ & $0.0065069$ & $0.34038\%$ & $0.0066525$ & $0.0071857$ & $-7.4193\%$ \\
$1.6$&$0.0065600$ & $0.0065516$ & $0.12813\%$ & $0.0066839$ & $0.0072335$ & $-7.5972\%$ \\
$1.7$&$0.0065911$ & $0.0065966$ & $-0.083181\%$ & $0.0067157$ & $0.0072818$ & $-7.7743\%$ \\
$1.8$&$0.0066250$ & $0.0066458$ & $-0.31232\%$ & $0.0067502$ & $0.0073345$ & $-7.9664\%$ \\
$1.9$&$0.0066620$ & $0.0070707$ & $-5.7812\%$ & $0.0067878$ & $0.0078016$ & $-12.995\%$ \\
$2$&$0.0066992$ & $0.0071339$ & $-6.0934\%$ & $0.0068257$ & $0.0078694$ & $-13.264\%$ \\  
\hline 
\end{tabular}
\caption{Comparison of FFT and quantization prices for different maturities (strikes at $2\%$ and $1\%$,  differences in relative terms). Quantization with $10$ points and FFT with $4096$ points. The parameter set used here is reported in Table \ref{table:calibratedParameters}.
\label{table:priceComparison}}
\end{table}

\begin{table}[ht]
\begin{tabular}{|c|c|c|c|c|c|c|}
\hline
& FFT $2\%$ & Quant. $2\%$ &Difference $2\%$ & FFT $1\%$ & Quant. $1\%$ &  Difference $1\%$ \\
\hline
$1$&$0.0064262$ & $0.0063429$ & $1.3127\%$ & $0.0065478$ & $0.0070095$ & $-6.5867\%$ \\
$1.1$&$0.0064433$ & $0.0063672$ & $1.1951\%$ & $0.0065652$ & $0.0070356$ & $-6.6849\%$ \\
$1.2$&$0.0064605$ & $0.0063917$ & $1.0773\%$ & $0.0065827$ & $0.0070618$ & $-6.7832\%$ \\
$1.3$&$0.0064841$ & $0.0064252$ & $0.91574\%$ & $0.0066067$ & $0.0070978$ & $-6.9182\%$ \\
$1.4$&$0.0065162$ & $0.0064710$ & $0.69830\%$ & $0.0066394$ & $0.0071468$ & $-7.1000\%$ \\
$1.5$&$0.0065487$ & $0.0065173$ & $0.48133\%$ & $0.0066725$ & $0.0071965$ & $-7.2815\%$ \\
$1.6$&$0.0065814$ & $0.0065641$ & $0.26495\%$ & $0.0067058$ & $0.0072466$ & $-7.4625\%$ \\
$1.7$&$0.0066145$ & $0.0066113$ & $0.049213\%$ & $0.0067395$ & $0.0072973$ & $-7.6430\%$ \\
$1.8$&$0.0066505$ & $0.0066628$ & $-0.18451\%$ & $0.0067761$ & $0.0073525$ & $-7.8387\%$ \\
$1.9$&$0.0066896$ & $0.0070315$ & $-4.8632\%$ & $0.0068159$ & $0.0077575$ & $-12.138\%$ \\
$2$&$0.0067291$ & $0.0070963$ & $-5.1748\%$ & $0.0068561$ & $0.0078270$ & $-12.405\%$   \\
\hline 
\end{tabular}
\caption{Comparison of FFT and quantization prices for different maturities (strikes at $2\%$ and $1\%$,  differences in relative terms). Quantization with $20$ points and FFT with $4096$ points. Starting from the parameter set of Table \ref{table:calibratedParameters}, we increased $\sigma$ and $\eta$ by $50\%$ and set $\alpha = 1.8$.
\label{table:priceComparison2}}
\end{table}

\subsection{Model calibration} 

To illustrate the calibration of our model to market data, we start by describing the market data and the reconstruction of the term structures.

\subsubsection{Market data}

We consider market data for the EUR market as of 25 June 2018, consisting of both linear and non-linear interest rate derivatives. The set of tenors is $\mathcal{D}=\left\{{\rm 3M, 6M}\right\}$. Market data on linear products consist of OIS and interest rate swaps, from which the discount curve $T \mapsto B(0, T)$ and the forward curves $T \mapsto L\left(0,T, \delta_{i}\right)$, for $\delta_{1}={\rm 3M}$ and  $\delta_{2}={\rm 6M}$, are constructed using the bootstrapping procedure from the Finmath Java library (see \cite{finmath,fries12}). 
The OIS discount curve is bootstrapped from OIS swaps, using cubic spline interpolation on logarithmic discount factors with constant extrapolation. Similarly, the 3M and 6M forward curves are bootstrapped from market quotes of FRAs (for short maturities) and swaps (for maturities beyond two years), using cubic spline interpolation on forwards with constant extrapolation. Figure~\ref{fig:curves} reports the resulting discount and forward curves. We can observe that the spread between the 3M and the 6M curves is more pronounced below twelve years and decreases afterwards. We also notice that, for short maturities, discount factors are larger than one and forward rates are negative.

\begin{figure}[ht]
  \centering
  \subfloat{\label{fig:discountCurve}\includegraphics[scale=0.4]{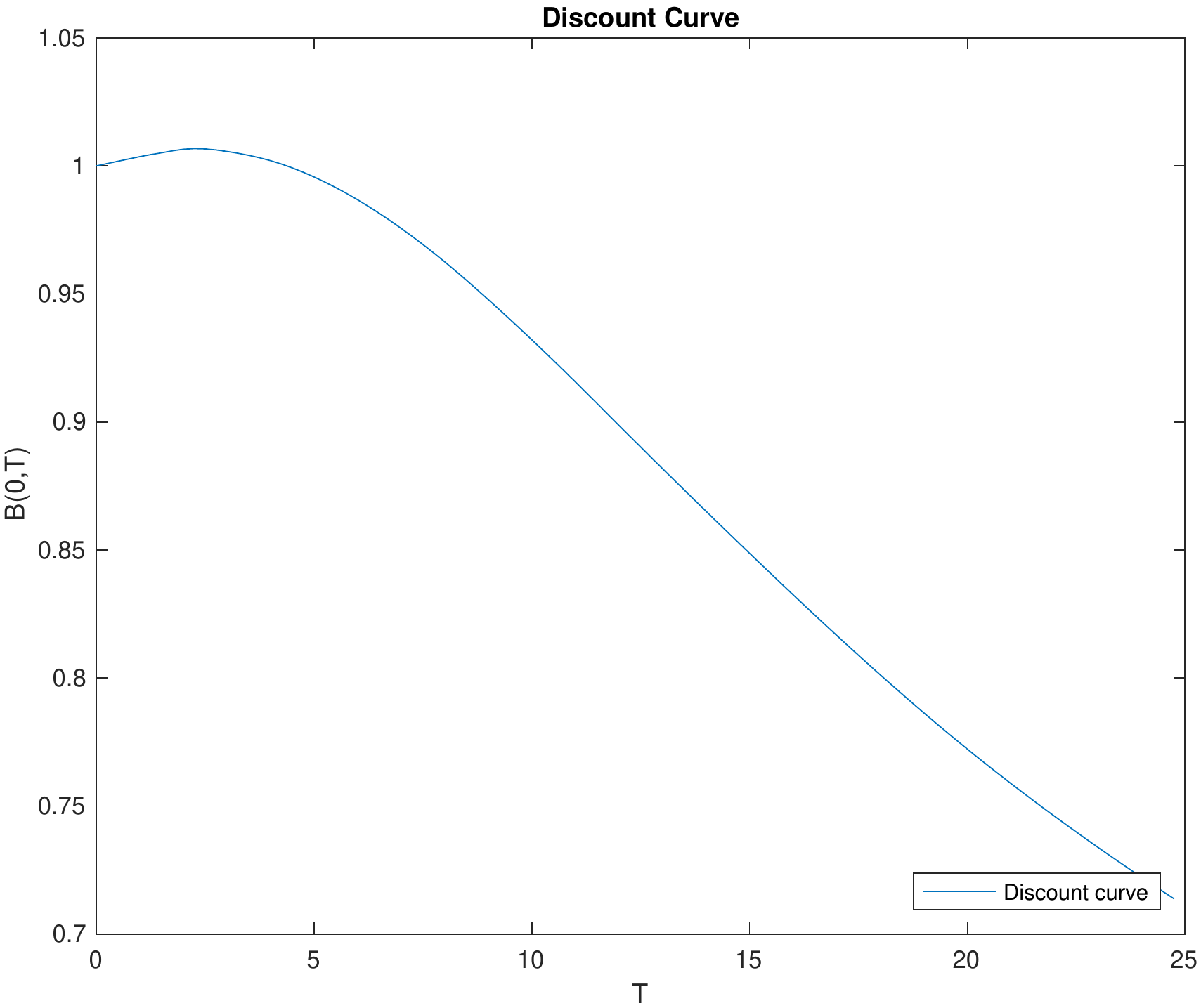}} \                                          \subfloat{\label{fig:forwardCurve}\includegraphics[scale=0.4]{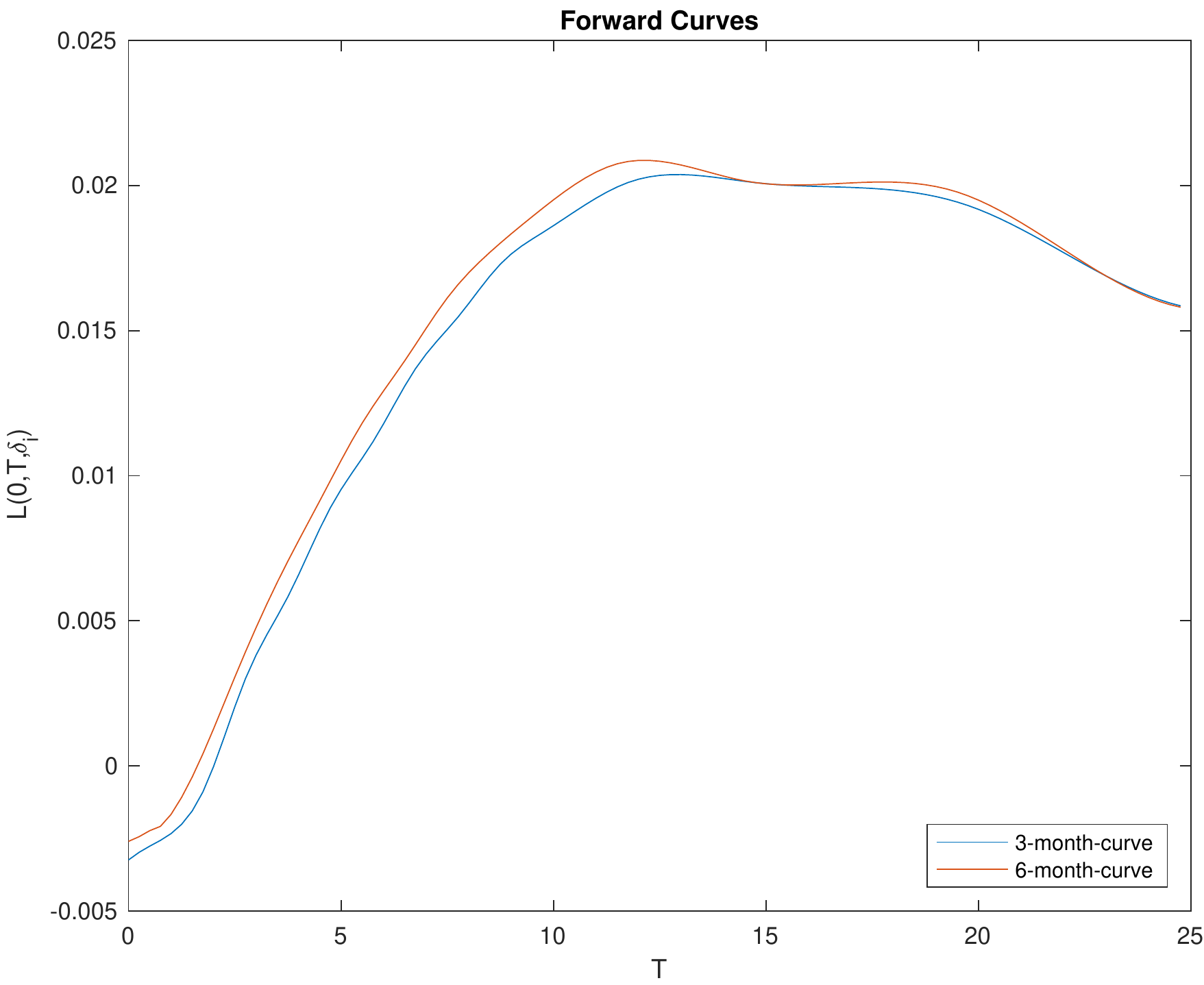}}
\caption{Discount and forward curves as of 25 June 2018. \label{fig:curves}}
\end{figure}

Concerning non-linear interest rate products, we focus on caplet market data, suitably bootstrapped from market cap volatilities. Consistently with the presence of negative interest rates, we also have market quotes for caps having a negative strike rate. Therefore, the boostrapped caplet volatility surface refers to strike prices ranging between $-0.13\%$ and $2\%$ and  maturities between $6$ months and $6$ years. Caplets with maturity larger than two years are indexed to the $6$-month rate while those with shorter expiry are linked to the $3$-month curve. Market data are given in terms of {\em normal} implied volatilities.
A normal implied volatility is obtained by numerically searching for the value of $\sigma_{\rm mkt}^{\rm imp}(K,T_i)$ such that the Bachelier pricing formula for a caplet
\be
\label{eq:caplet_bachelier}
\Pi^{\rm CPLT}_{Bac}(t; T_{i-1},\delta_{i},K,1)
:=B(t,T_i)\delta\sigma_{\rm mkt}^{\rm imp}(K,T_i)\sqrt{T_{i-1}-t}\left(\frac{1}{\sqrt{2\pi}}e^{-\frac{z^2}{2}}+zN(z)\right),
\ee
with
\[
N(x)=\frac{1}{\sqrt{2\pi}}\int_{-\infty}^xe^{-\frac{y^2}{2}}\ud y
\qquad\text{and}\qquad
z = \frac{L(t,T_{i-1},\delta_{i})-K}{\sigma_N\sqrt{T_{i-1}-t}},
\]
provides the best fit to the market price of a given caplet.

\subsubsection{Implementation} 

For a vector $p$ of model parameters belonging to the set $\mathcal{P}$ of admissible values (see Section \ref{sec:model_spec}), we compute model-implied caplet prices by means of the Fourier approach of Section \ref{sec:caplet_pricing} (see Proposition \ref{prop:caplet}). The numerical integration is performed by means of the FFT approach of \cite{cm99}, with $32768$ points and integration mesh size $0.05$. For a fixed maturity, a single execution of the FFT routine yields a vector of model prices for several moneyness levels. Prices are then converted into normal implied volatilities by using formula \eqref{eq:caplet_bachelier}. Repeating this procedure for different maturities, we generate a corresponding model-implied volatility $\sigma_{\rm mod}^{\rm imp}(K_k,T_j,p)$ for each strike $K_k$ and maturity $T_j$ present in our sample of market data.

The aim of the calibration procedure is to find the vector of parameters which solves the problem
\be	\label{eq:calibration_problem}
\min_{p\in\mathcal{P}}\sum_{j,k}\left(\sigma_{\rm mkt}^{\rm imp}(K_k,T_j)-\sigma_{\rm mod}^{\rm imp}(K_k,T_j,p)\right)^{2}.
\ee

\subsubsection{Calibration results} 
\label{sec:calibration_results}
We calibrated a two-dimensional version of the model of Section~\ref{sec:model_spec}. To solve problem \eqref{eq:calibration_problem}, we used the multi-threaded Levenberg-Marquardt optimizer of the Finmath Java library with 8 threads, imposing the parameter restrictions given after equation \eqref{eq:flow}.
The calibrated values of the parameters are reported in Table~\ref{table:calibratedParameters}. 
We can observe that the calibration results demonstrate an important role of the jump component, apparently more important than the diffusive component. Moreover, the calibrated value of $\alpha$ is rather close to $1$, thus showing evidence of potential jump clustering and persistence of low values (compare with the discussion in Section \ref{sec:alpha_stable}). Together with the rather small value of $\theta$, this also indicates a significant likelihood of large jumps.
As illustrated by Figures \ref{fig:calibrationResults}, the model achieves a good fit to market data, across different strikes and maturities. 
We remark that, in terms of number of parameters, the model under consideration is even more parsimonious than the simple specifications calibrated in \cite{CFGaffine}.

\begin{table}[ht]
\centering
\begin{tabular}{|c|c|c|c|}
 \hline 
 $b$ & $0.05353$ & $\alpha$ & $1.31753$ \\ 
 \hline 
 $\sigma$ & $0.00582$ & $y_0$ & $(0.00495,0.00507)^\top$ \\ 
 \hline 
 $\eta$ & $0.04070$ & $\beta$ & $(9.99999E-4, 0.00340)^\top$ \\ 
 \hline 
 $\theta$ & $0.05070$ & $\mu$ & $(1.49999,1.00000)^\top$ \\ 
 \hline 
 \end{tabular}  
\caption{Calibrated values of the parameters. \label{table:calibratedParameters}}
\end{table}

\begin{figure}[ht]
  \centering
   \subfloat{\label{fig:priceCompare}\includegraphics[scale=0.45]{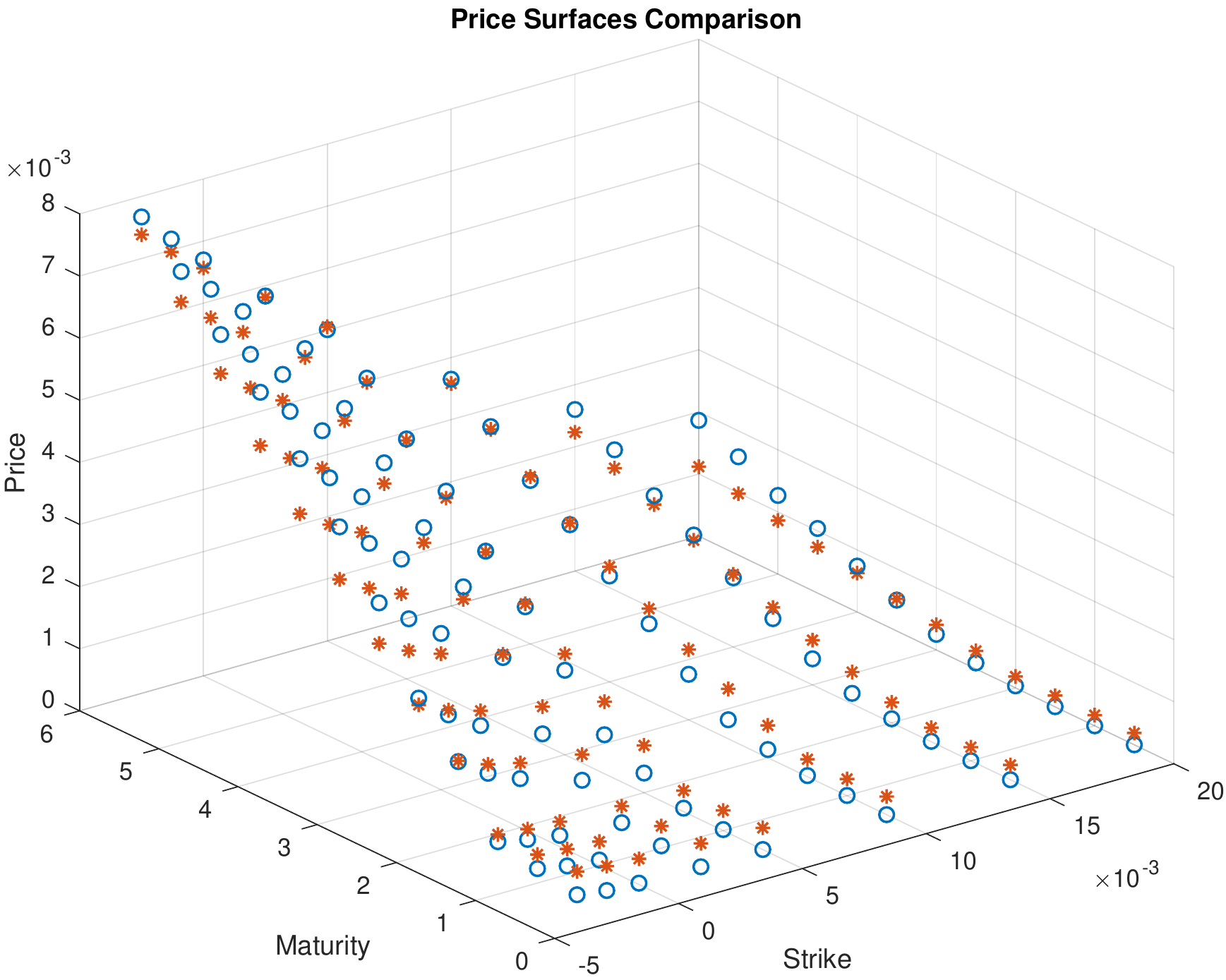}} \
    \subfloat{\label{fig:errorCompare}\includegraphics[scale=0.45]{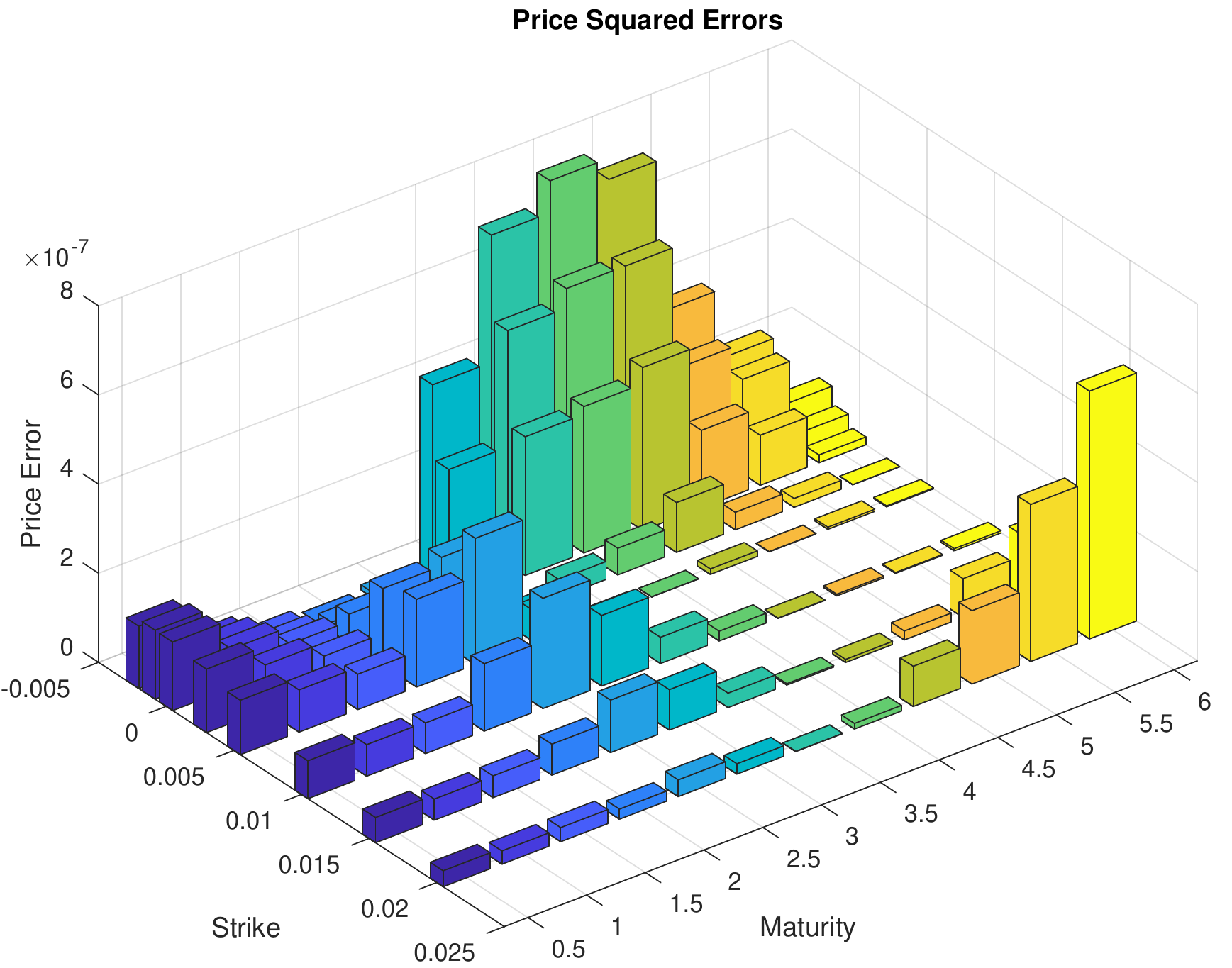}} 
\caption{Model prices against market prices as of 25 June 2018. On the left panel, market prices are represented by blue circles while model prices by red stars.
On the right panel, price squared errors are reported.
\label{fig:calibrationResults}}
\end{figure}

Motivated by the presence of forward rates, we also calibrated a version of the model where the OIS short rate is affected by an auxiliary Ornstein-Uhlenbeck process, in line with Remark \ref{rem:negativeOIS}. However, this alternative specification did not yield a significant improvement of the quality of the fit. This seems to indicate that the deterministic shift $\ell(t)$ introduced in \eqref{eq:flow_OIS} does suffice to capture the probability mass in the negative axis for the short rate. This is also in line with the widespread use of deterministic shift extensions in the financial industry (see, e.g., \cite{mer_futures}).

\section{Conclusions}	\label{sec:conclusions}

In the present paper, we have proposed a modelling framework for multiple yield curves based on CBI processes. The self-exciting behavior of jump-type CBI processes is consistent with most of the empirical features of spreads. At the same time, exploiting the fundamental link with affine processes, our setup allows for an efficient valuation of interest rate derivatives. Models driven by a flow of tempered alpha-stable CBI processes represent a parsimonious way of modelling spreads in a realistic way, with a natural interpretation of the stochastic drivers in terms of risk factors.
In our view, flows of CBI processes can have interesting applications to financial markets where multiple term structures coexist, such as multi-currency or energy markets.

\appendix

\section{Proofs of the Results on CBI Processes}		\label{sec:proofs}

In this appendix, we collect the proofs of the theoretical results stated in Sections \ref{sec:prelim_CBI} and \ref{sec:alpha_stable}.

\begin{proof}[Proof of Theorem \ref{thm:affine}]
The fact that $X$ is a regular affine process follows from \cite[Corollary 2.10]{dfs03}. 
For $p\geq0$ and $q=0$, formula \eqref{eq:aff_transform} simply follows from Definition \ref{def:CBI} and $T^{(p,0)}=+\infty$.
Under Assumption \ref{ass:Lipschitz}, $\phi$ is a locally Lipschitz continuous function on $\cY$. Therefore, for every $(p,q)\in\cY\times\R_+$, standard existence and uniqueness results for solutions to first-order ODEs imply the existence of a maximal lifetime $T^{(p,q)}\in(0,+\infty]$ such that \eqref{eq:ODE_v} admits a unique solution $v(\cdot,p,q):[0,T^{(p,q)})\rightarrow\cY$ and the integral $\int_0^t\psi(v(s,p,q))\ud s$ is finite, for all $t<T^{(p,q)}$.
Hence, part (b) of \cite[Theorem 2.14]{krm12} applied to the bi-dimensional affine process $(X,\int_0^{\cdot}X_s\,\ud s)$ implies that the affine transform formula \eqref{eq:aff_transform} holds for every $(p,q)\in\cY\times\R_+$ and $t<T^{(p,q)}$.
\end{proof}

\begin{proof}[Proof of Theorem \ref{thm:lifetime}]
Let $(p,q)\in\cY\times\R_+$. 
In the trivial case $\phi\equiv0$, the ODE \eqref{eq:ODE_v} is solved by the function $v(t,p,q)=p+qt$ and hence $T^{(p,q)}=+\infty$. 
In the rest of the proof, we shall assume that $\phi(y)\neq0$ for some $y\in\cY$.
Note that $\{y\in\cY:q-\phi(y)\geq0\}\cap\R_-\neq\emptyset$, so that $p_q$ is always well-defined with values in $[\ell\vee\kappa,0]$ and,  by continuity of $\phi$, it satisfies $\phi(p_q)\leq q$.
If $\phi(p_q)=q$, then the constant function $\tilde{v}(\cdot):=p_q$ is a solution to \eqref{eq:ODE_v} with initial value $p=p_q$. Since the ODE \eqref{eq:ODE_v} admits a unique solution for every $(p,q)\in\cY\times\R_+$ by Assumption \ref{ass:Lipschitz}, it holds that $v(t,p_q,q)=p_q$, for all $t\geq0$. Since $p_q\in\cY$, it follows that $T^{(p_q,q)}=+\infty$.
On the other hand, if $\phi(p_q)<q$, this means that $p_q=\ell\vee\kappa$. Let us define $p^+_q:=\sup\{y\in\cY : q-\phi(y)\geq0\}$. By convexity of $\phi$, it holds that $q-\phi(y)>0$ for all $y\in[p_q,p^+_q)$ and, therefore, the function $t\mapsto v(t,p_q,q)$ is increasing.
The ODE \eqref{eq:ODE_v} implies that
%then $q-\phi(y)>0$ for all $y\in\cY\cap(-\infty,0)$. By convexity of $\phi$, the equation $\phi(y)=q$ admits a unique solution $p_q^+$ in $[0,+\infty)$. 
\[
\int_{p_q}^{v(t,p_q,q)}\frac{\ud y}{q-\phi(y)} = t,
\qquad\text{ for all }t\geq0.
\]
Letting $t\rightarrow+\infty$ on both sides of the above equality, we get that $v(t,p_q,q)\rightarrow p_q^+$ as $t\rightarrow+\infty$, while $v(t,p_q,q)<p_q^+$ for all $t\geq0$. Since $\psi$ is increasing, it holds that $\psi(p_q)\leq\psi(v(t,p_q,q))\leq\psi(p_q^+)$ and, therefore, the integral  $\int_0^T\psi(v(s,p,q))\ud s$ is finite for all $T\in\R_+$.
This implies that $T^{(p_q,q)}=+\infty$.
By \cite[Theorem 2.14]{krm12} applied to the bi-dimensional affine process $(X,\int_0^{\cdot}X_s\,\ud s)$, this means that $\EE[\exp(-p_qX_t-q\int_0^tX_s\,\ud s)]<+\infty$, for all $t\geq0$. Therefore, it holds that $\EE[\exp(-pX_t-q\int_0^tX_s\,\ud s)]<+\infty$ for all $t\geq0$ and $p\geq p_q$. In turn, by \cite[Proposition 3.3]{krm12}, this implies that $T^{(p,q)}=+\infty$, for all $p\geq p_q$.
Let us now consider the case $p<p_q$ and suppose first that $\kappa\leq\ell$. In this case, due to the convexity of $\phi$, it holds that $q-\phi(y)<0$ for all $y\in[\ell,p_q)$.
Arguing similarly as in \cite[Theorem 4.1]{kr11}, the ODE \eqref{eq:ODE_v} admits a unique solution $v(t,p,q)$ which is strictly decreasing in $t$, with values in $[\ell,p]$. This solution admits a maximal extension to an interval $[0,T^*)$ such that one of the following two situations occurs:
\begin{itemize}
\item[(i)] $T^*=+\infty$;
\item[(ii)] $T^*<+\infty$ and $\lim_{t\rightarrow T^*}v(t,p,q)=\ell$.
\end{itemize} 
In case (i), since $v(\cdot,p,q)$ is strictly decreasing, $\alpha:=\lim_{t\rightarrow+\infty}v(t,p,q)$ is well-defined, with values in $\{-\infty\}\cup[\ell,p)$. If $\alpha>-\infty$, then $\alpha$ must be a stationary point, i.e., $q-\phi(\alpha)=0$. However, this contradicts the fact that $\alpha<p<p_q$. The case $\alpha=-\infty$ can only happen if $\ell=-\infty$ and in this case $\lim_{t\rightarrow T^*}v(t,p,q)=\ell$, exactly as in case (ii).
In case (ii), let $(T_n)_{n\in\N}$ be an increasing sequence with $T_n<T^*$, for all $n\in\N$, such that $T_n\rightarrow T^*$ as $n\rightarrow+\infty$. Similarly as above, it holds that
\be	\label{eq:lifetime_proof}
\int_{p}^{v(T_n,p,q)}\frac{\ud y}{q-\phi(y)} = T_n,
\qquad\text{ for each }n\in\N.
\ee
Letting $n\rightarrow+\infty$ on both sides of \eqref{eq:lifetime_proof}, we obtain that $T^*=\int_p^{\ell}(q-\phi(y))^{-1}\ud y$. If $\kappa\leq\ell$, then $\int_0^t\psi(v(s,p,q))\ud s$ is always finite whenever $v(t,p,q)$ is finite, so that $T^{(p,q)}=T^*$, thus proving \eqref{eq:lifetime} in the case $\kappa\leq\ell$.
If $\kappa>\ell$, then the lifetime is given by $T^{(p,q)}=\inf\{t\in\R_+ : v(t,p,q)=\kappa\}$. Replacing $T_n$ with $T^{(p,q)}$ into \eqref{eq:lifetime_proof} yields \eqref{eq:lifetime}, thus proving the first part of the theorem.

To prove the last statement of the theorem, suppose that $\psi(\ell\vee\kappa)>-\infty$. In this case, if $-\infty<\phi(\ell\vee\kappa)\leq0$, then $\cY=[\ell\vee\kappa,+\infty)$ and $p_q=\ell\vee\kappa$, for every $q\in\R_+$. By the first part of the theorem, it follows that $T^{(p,q)}=+\infty$ for all $(p,q)\in[\ell\vee\kappa,+\infty)\times\R_+$. 
Conversely, if $T^{(p,q)}=+\infty$ for all $(p,q)\in[\ell\vee\kappa,+\infty)\times\R_+$, then in particular $\ell\vee\kappa\in\cY$ and $T^{(\ell\vee\kappa,0)}=+\infty$. This is only possible if $-\infty<\phi(\ell\vee\kappa)\leq0$. Indeed, if $\kappa\leq\ell$ and $\phi(\ell)>0$, then the solution $v(t,\ell,0)$ to the ODE \eqref{eq:ODE_v} with $p=\ell$ would explode immediately (i.e., $T^{(\ell,0)}=0$). Similarly, if $\kappa>\ell$ and $\phi(\kappa)>0$, then the solution $v(t,\kappa,0)$ to \eqref{eq:ODE_v} with $p=\kappa$ would be strictly decreasing in a neighborhood of zero and, therefore, the integral $\int_0^{\cdot}\psi(v(s,\kappa,0))\ud s$ would immediately diverge to $-\infty$ (i.e., $T^{(\kappa,0)}=0$). 
\end{proof}

\begin{proof}[Proof of Lemma \ref{lem:branch_alpha}]
We only consider the case $\theta>0$, referring to \cite{JMS17} for the case $\theta=0$.
Note first that
\[
\int_0^{+\infty}(e^{-zu}-1+zu)\frac{e^{-\theta u}}{u^{1+\alpha}}\ud u
= \int_0^{+\infty}\sum_{n=2}^{+\infty}\frac{(-zu)^n}{n!}u^{-1-\alpha}e^{-\theta u}\ud u.
\]
If $z>-\theta$, we can interchange the order of integration and summation, thus obtaining
\begin{align*}
\int_0^{+\infty}(e^{-zu}-1+zu)\frac{e^{-\theta u}}{u^{1+\alpha}}\ud u
&= \sum_{n=2}^{+\infty}\frac{(-z/\theta)^n}{n!}\theta^{\alpha}\Gamma(n-\alpha)
\\
&= \theta^{\alpha}\Gamma(-\alpha)\left(\frac{\alpha(\alpha-1)}{2!}\left(\frac{z}{\theta}\right)^2+\frac{\alpha(\alpha-1)(\alpha-2)}{3!}\left(\frac{z}{\theta}\right)^3+\dots\right).
\end{align*}
The last line of the above expression is related to the power series
\[
\left(1+\frac{z}{\theta}\right)^{\alpha}
= 1 + \alpha\frac{z}{\theta} + \frac{\alpha(\alpha-1)}{2!}\left(\frac{z}{\theta}\right)^2+\frac{\alpha(\alpha-1)(\alpha-2)}{3!}\left(\frac{z}{\theta}\right)^3+\dots,
\]
which converges if and only if $z>-\theta$. Therefore, it holds that
\[
\int_0^{+\infty}(e^{-zu}-1+zu)\frac{e^{-\theta u}}{u^{1+\alpha}}\ud u
=  \theta^{\alpha}\Gamma(-\alpha)\left(\left(1+\frac{z}{\theta}\right)^{\alpha}-1-\alpha\frac{z}{\theta}\right),
\]
from which \eqref{eq:branch_alpha} follows due to the definition of $C(\alpha,\eta)$ given in \eqref{eq:C}.
By continuity, formula \eqref{eq:branch_alpha} can then be extended to $z=-\theta$. The convexity of $\phi$ follows by noting that
\[
\phi''(z) = \sigma^2 - \eta^{\alpha}\frac{\alpha(\alpha-1)(z+\theta)^{\alpha-2}}{\cos(\alpha\pi/2)} \geq 0,
\qquad\text{ for all }z\geq-\theta.
\]
By computing $\partial\phi(z)/\partial\theta$ and using Bernoulli's inequality, it can be easily verified that $\phi$ is decreasing with respect to $\theta$.
Finally, since $\phi\in\mathcal{C}^1([-\theta,+\infty))$, Assumption \ref{ass:Lipschitz} is satisfied.
\end{proof}

\begin{proof}[Proof of Proposition \ref{prop:properties}]
(i): 
this is a direct consequence of the last part of Theorem \ref{thm:lifetime} together with \eqref{eq:lifetime_moments}.
(ii): as a consequence of \eqref{eq:branch_alpha}, it holds that $\phi(z)\geq bz+\sigma^2z^2/2$, for all $z\geq0$. Furthermore, if $2\beta\geq\sigma^2$, it can be checked that $\psi(z)/\phi(z)\geq z^{-1}(1+O(z^{\alpha-2}))$ for all sufficiently large $z$.
The result follows by the same arguments used in the proof of \cite[Proposition 3.4]{JMS17}.
\end{proof}

\begin{proof}[Proof of Proposition \ref{prop:ergodic}]
The fact that $(P_t(\cdot,x))_{t\geq0}$ converges weakly to a stationary distribution $\rho$ follows from \cite[Corollary 3.21]{Li}, while formula \eqref{eq:ergodic_laplace} for $p\geq0$ corresponds to \cite[Theorem 3.20]{Li}. Consider the case $p\in(p_0,0)$, with $p_0<0$. Since $\phi(z)<0$ for all $z\in(p_0,0)$, the solution $v(t,p,0)$ to the ODE \eqref{eq:ODE_v} with $q=0$ is strictly increasing. Furthermore, \eqref{eq:ODE_v} implies that
\[
-\int^{v(t,p,0)}_p\frac{\ud y}{\phi(y)} = t,
\qquad\text{ for all }t\geq0.
\]
Therefore, letting $t\rightarrow+\infty$ on both sides, it follows that $\lim_{t\rightarrow+\infty}v(t,p,0)=0$. In turn, as a consequence of \eqref{eq:aff_transform} (with $q=0$), this implies that
\[
\lim_{t\rightarrow+\infty}\EE[e^{-pX_t}] 
= \exp\left(-\int_0^{+\infty}\psi(v(s,p,0))\ud s\right)
= \exp\left(-\beta\int_0^p\frac{z}{\phi(z)}\ud z\right),
\]
where the last equality follows by a change of variable together with equation  \eqref{eq:ODE_v}.
Formula \eqref{eq:ergodic_mean} follows by differentiating \eqref{eq:ergodic_laplace} at $p=0$. 
Finally, to prove the exponential ergodicity of $X$, recall that $\phi(z)\geq bz+\sigma^2z^2/2$, for all $z\geq0$ (see the proof of Proposition \ref{prop:properties}). Therefore, it holds that
\[
\int_c^{+\infty}\frac{1}{\phi(z)}\ud z
\leq \int_c^{+\infty}\frac{1}{bz+\frac{\sigma^2z^2}{2}}\ud z < +\infty,
\qquad\text{ for any $c>0$.}
\]
In view of \cite[Theorem 2.5]{LM15} (see also \cite[Theorem 10.5]{Li19}), this suffices to prove the claim.
\end{proof}

\section{Simulation of Tempered Alpha-Stable CBI Processes}	\label{app:simulation}

In this appendix, we present a simulation method for tempered $\alpha$-stable CBI processes, as considered in Section \ref{sec:alpha_stable}. This method has been used to generate the sample paths shown in Figure \ref{fig:model_paths}. 
Note that, in view of Proposition \ref{prop:flow_CBI}, to simulate a multi-curve model driven by a flow of tempered $\alpha$-stable processes as described in Section \ref{sec:model_spec}, it suffices to simulate $m$ independent tempered $\alpha$-stable CBI processes. Therefore, we shall only study the simulation of a one-dimensional tempered $\alpha$-stable process $X=(X_t)_{t\in[0,T]}$, for some fixed time horizon $T>0$ and for $\alpha\in(1,2)$.

We recall from Proposition \ref{prop:CBI} and Remark \ref{rem:BM}  the representation
\be	\label{eq:SDE_alphastable}
X_t = x + \int_0^t(\beta-bX_s)\ud s + \sigma\int_0^t\sqrt{X_s}\ud B_s + \int_0^t\int_0^{+\infty}\!\int_0^{X_{s-}}z\tildeM(\ud s,\ud z,\ud u),
\ee
for all $t\in[0,T]$, where $B=(B_t)_{t\in[0,T]}$ is a one-dimensional Brownian motion and $\tildeM(\ud s,\ud z,\ud u)$ is the compensated Poisson random measure with intensity $\ud s\,\pi(\ud z)\ud u$, where, in view of Definition \ref{def:alpha_stable}, the measure $\pi(\ud z)$ is specified as in \eqref{eq:alpha_stable} with the constant $C$ being chosen as in \eqref{eq:C}. 

To simulate a sample path of $X$, we rely on the regular Euler method for stochastic differential equations with jumps described in \cite[Chapter 6]{PBL10}. 
We consider an equidistant partition of the interval $[0,T]$ with $N$ steps (in our application, we chose $N=1000$). Letting $\Delta:=T/N$ and $t_n:=n\Delta$, for all $n=0,1,\ldots,N$, we denote by $\hat{X}=(\hat{X}_{t_n})_{n=0,1,\ldots,N}$ the simulated path of $X$.

As a first step, we approximate the measure $\pi(\ud z)$ by considering the truncated measure $\pi_{\epsilon}(\ud z):=\pi(\ud z)\ind_{\{z\geq\epsilon\}}$, for a sufficiently small $\epsilon>0$ (in our application, we chose $\epsilon=0.001$)\footnote{This truncation of the jump measure $\pi$ serves the achieve integrability, at the expense of eliminating very small jumps. Along the lines of \cite{AR01}, the small jump component can be approximated by introducing a suitably rescaled Brownian motion $B'$, independent of the Brownian motion $B$ appearing in \eqref{eq:SDE_alphastable}.}. 
The total mass of the measure $\pi_{\epsilon}$ is given by
\[
C_{\epsilon} := \pi_{\epsilon}(\R_+) 
= C(\alpha,\eta)\int_{\epsilon}^{+\infty}\frac{e^{-\theta z}}{z^{1+\alpha}}\ud z
= C(\alpha,\eta)\theta^{\alpha}\Gamma(-\alpha,\epsilon\theta)
= \frac{\eta^{\alpha}\theta^{\alpha}}{\cos(\alpha\pi/2)}\frac{\Gamma(-\alpha,\epsilon\theta)}{\Gamma(-\alpha)},
\]
where $\Gamma(-\alpha,\epsilon\theta):=\int_{\epsilon\theta}^{+\infty}u^{-\alpha-1}e^{-u}\ud u$ denotes the incomplete Gamma function (see \cite[Problem 1.10]{Lebedev}). 
For each $n=1,\ldots,N$, we approximate the number of jumps generated by the random measure $M$ in the interval $[t_{n-1},t_n]$  by a random variable $J_n$ following a Poisson distribution with intensity
\be	\label{eq:sim_intensity}
\int_{t_{n-1}}^{t_n}\int_0^{+\infty}\!\int_0^{\hat{X}_{t_{n-1}}}\ud s\,\pi_{\epsilon}(\ud z)\ud u
= -\hat{X}_{t_{n-1}}C_{\epsilon}\,\Delta.
\ee

The random variables representing the sizes of the jumps generated by the random measure $M$ are drawn from a distribution with density $f_{\epsilon}$, where
\be	\label{eq:sim_density}
f_{\epsilon}(z) = \frac{1}{C_{\epsilon}}\frac{\pi_{\epsilon}(\ud z)}{\ud z}
= \frac{1}{\theta^{\alpha}\Gamma(-\alpha,\epsilon\theta)}\frac{e^{-\theta z}}{z^{1+\alpha}}\ind_{\{z\geq\epsilon\}}.
\ee
Observe that
\be	\label{eq:acc_rej}
f_{\epsilon}(z)  \leq \frac{e^{-\epsilon\theta}}{\alpha(\epsilon\theta)^{\alpha}\,\Gamma(-\alpha,\epsilon\theta)}\frac{\alpha\epsilon^{\alpha}}{z^{1+\alpha}}\ind_{\{z\geq\epsilon\}}
= \frac{e^{-\epsilon\theta}}{\alpha(\epsilon\theta)^{\alpha}\,\Gamma(-\alpha,\epsilon\theta)}f^{\rm Par}_{\epsilon,\alpha}(z),
\ee
where $f^{\rm Par}_{\epsilon,\alpha}$ denotes the density function of a Pareto distribution with scale parameter $\epsilon$ and shape parameter $\alpha$. In view of relation \eqref{eq:acc_rej}, we can simulate random variables with density $f_{\epsilon}$ by means of an acceptance-rejection scheme (see, e.g., \cite[Section 1.4]{g19}) based on a Pareto distribution.

To generate the path $\hat{X}=(\hat{X}_{t_n})_{n=0,1,\ldots,N}$, we set $\hat{X}_0:=x$ and iteratively, for all $n=1,\ldots,N$,
\be	\label{eq:sim_path}
\hat{X}_{t_n} := \hat{X}_{t_{n-1}} + \left(\beta-\Bigl(b+\frac{\eta^{\alpha}\theta^{\alpha-1}}{\cos(\alpha\pi/2)}\frac{\Gamma(1-\alpha,\epsilon\theta)}{\Gamma(-\alpha)}\Bigr)\hat{X}_{t_{n-1}}\right)\Delta + \sigma\sqrt{\hat{X}^+_{t_{n-1}}\,\Delta}\,Z_n+\sum_{k=0}^{J_n}\xi_{n,k},
\ee
where $(Z_n)_{n=1,\ldots,N}$ is a sequence of i.i.d. standard Normal random variables, $(J_n)_{n=1,\ldots,N}$ is a sequence of independent random variables such that each $J_n$ follows a Poisson distribution with intensity given by \eqref{eq:sim_intensity} and $(\xi_{n,k})_{n=1,\ldots,N,k\in\N}$ is a family of i.i.d. random variables with common density $f_{\epsilon}$ as given in \eqref{eq:sim_density}.

\bibliographystyle{alpha}
\bibliography{biblio_multi_curve}

\end{document}